\newtheorem{theorem}{Theorem}
\newtheorem{lemma}{Lemma}
\newtheorem{proposition}{Proposition}
\newtheorem{corollary}{Corollary}
\newtheorem{example}{Example}
\newtheorem*{remark}{Remark}
\DeclareMathOperator*{\argmin}{arg\,min}
\DeclareMathOperator*{\argmax}{arg\,max}
\newcommand{\defin}{\overset{\Delta}{=}}
\newcommand{\cFd}{\mathcal{F}_d}
\newcommand{\bs}{\mathbf{s}}
\newcommand{\bxs}{\mathbf{x}^*}
\newcommand{\bphi}{{\mathbf{\Phi}}}
\newcommand{\bpsi}{{\mathbf{\Psi}}}
\newcommand{\br}{\mathbf{r}}
 \newcommand{\bss}{\bs^{*}}
\newcommand{\cE}{\mathcal{E}}
\newcommand{\cF}{\mathcal{F}}
\newcommand{\ba}{\mathbf{a}}
\newcommand{\bb}{\mathbf{b}}
\newcommand{\bu}{\mathbf{u}}
\newcommand{\bx}{\mathbf{x}}
\newcommand{\bl}{\bm{l}}
\newcommand{\by}{{\mathbf{y}}}
\newcommand{\bz}{{\mathbf{z}}}
\newcommand{\bq}{\mathbf{q}}
\newcommand{\bp}{\mathbf{p}}
\newcommand{\hbx}{\hat{\bx}}
\newcommand{\cL}{{\mathcal{L}}}
\newcommand{\cR}{{\mathcal{R}}}
\newcommand{\cO}{{\mathcal{O}}}
\newcommand{\sT}{\mathsf{T}}
\newcommand{\lsr}[2]{{#1}^{\perp{#2}}}
\newcommand{\defeq}{\overset{\text{def}}{=}}
\begin{document}

%
\title{Error-Correction for Sparse \\ Support Recovery Algorithms}


%
%
%

\author{Mohammad~Mehrabi and
     Aslan~Tchamkerten 
\thanks{M. Mehrabi is with the Data Sciences and Operations Department, Marshall School of Business, University of Southern California, Los Angeles, CA 90089-0809, US.}
\thanks{A. Tchamkerten is with the Department of Communications and Electronics, Telecom Paris, Institut Polytechnique de Paris, 19 Place Marguerite Perey, 91120 Palaiseau, FR.}
}

\maketitle

\begin{abstract}

	Consider the compressed sensing setup where the support $\bss$ of an $m$-sparse $d$-dimensional signal $\bx$ is to be recovered from $n$ linear measurements with a given algorithm. Suppose that the measurements are such that the algorithm does not guarantee perfect support recovery and that true features may be missed. Can they efficiently be retrieved? 
	
	This paper addresses this question through a simple error-correction module referred to as LiRE. LiRE takes as input an estimate $\bs_{\text{in}}$ of the true support $\bss$, and outputs a refined support estimate $\bs_{\text{out}}$.  In the noiseless measurement setup, sufficient conditions are established under which LiRE is guaranteed to recover the entire support, that is $\bs_{\text{out}}\supseteq \bss$. These conditions imply, for instance, that in the high-dimensional regime LiRE can correct a sublinear in $m$ number of errors made by Orthogonal Matching Pursuit (OMP). The computational complexity of LiRE is ${\cal{O}}(m n d)$.
	
	Experimental results with random Gaussian design matrices show that LiRE substantially reduces the number of measurements needed for perfect support recovery via Compressive Sampling Matching Pursuit, Basis Pursuit (BP), and OMP. Interestingly, adding LiRE to OMP yields a support recovery procedure that is more accurate and significantly faster than BP. This observation carries over in the noisy measurement setup where the combination of LiRE and OMP is faster and more accurate than LASSO. 
	Finally, as a standalone support recovery algorithm with a random initialization, experiments show that LiRE's reconstruction performance lies between OMP and BP.
	
	These results suggest that LiRE may be used generically, on top of any suboptimal baseline support recovery algorithm, to improve support recovery or to operate with a smaller number of measurements, at the cost of a relatively small computational overhead. Alternatively, LiRE may be used as a standalone support recovery algorithm that is competitive with respect to OMP.

\end{abstract}

\begin{IEEEkeywords}
 Compressed sensing, error-correction, feature selection, high dimension, linear model, support recovery
\end{IEEEkeywords}

%
\IEEEpeerreviewmaketitle
\section{Introduction}

Consider the support recovery problem of an $m$-sparse signal $\bx^*\in {\mathbb{R}}^d$ from $n<d$ linear measurements: \begin{align}\label{compsens}
    \by=\bphi \bx^*,
    \end{align}
where $\bphi\in {\mathbb{R}}^{n\times d}$ refers to the design matrix. Suppose an algorithm $A$ is used to recover the support $\bss$ of $\bx^*$ in a regime where errors may happen; for instance, the Restricted Isometry Property (RIP) constant of $\bphi$ need not be small enough to guarantee perfect support recovery---see, {\it{e.g.}}, \cite{wen2016sharp} for such a condition for OMP.  Can potential errors be efficiently corrected?

We address this question through a simple low complexity error-correction module, referred to as LiRE---for List Regression Error-correction. LiRE takes as input an initial support estimate $$\bs_{\text{in}}\defeq A(\by,\bphi)$$ provided by a baseline algorithm $A$, and produces a second support estimate $$\bs_{\text{out}}\defeq\text{LiRE}(\by,\bs_{\text{in}},\bphi)\overset{\text{notation}}{=}\text{LiRE}\circ A (\by,\bphi)$$ of size $m$. Under certain RIP conditions that depend on the number of missed features $|\bss\backslash \bs_{\text{in}}|$, LiRE's estimate $\bs_{\text{out}}$ includes $\bss$ (see Theorem~\ref{thm1} and Corollaries~\ref{corollary1},\ref{corollary2} in Section~\ref{mainres}). In the high-dimensional regime, these conditions imply, for instance, that we can use OMP in a regime where perfect support recovery is not guaranteed, and yet LiRE will recover all missed features as long as their number grows sublinearly in the sparsity level.

In a second part of the paper, we further assess the performance of LiRE first as an error-correction module, then as a standalone support recovery algorithm with a random initialization. We present four sets of numerical experiments
that address the following questions:  Can LiRE improve the support recovery of (good) baseline algorithms?
Can LiRE be combined with a baseline algorithm of similarly low complexity to achieve the performance of more complex reconstruction algorithms? 
Is LiRE robust to noise? And, 
how efficient is LiRE as a standalone support recovery algorithm? These questions are addressed by considering random Gaussian supports and design matrices.
\begin{itemize}
    \item The first experiment considers  OMP, BP (\cite{BP}), and Compressive Sampling Matching Pursuit (CoSaMP, \cite{cosamp}) as baseline algorithms. Results show that a few (five) iterations of LiRE increases and never decreases the average percentage of exact support recovery, for a non-trivial range of undersampling-sparsity operating regimes. In particular, LiRE reduces the number of measurements needed for perfect support recovery via CoSaMP, BP, and OMP by up to $15\%$, $25\%$, and $40\%$, respectively, depending on the the sparsity level.

\item 
The second set of simulations compares LiRE$\circ$OMP against BP and shows that even though LiRE$\circ$OMP has a significantly lower complexity than BP ($\cO(mnd)$ vs. $\cO(n^2d^{1.5})$, see \cite{new_fornasier2010numerical}), it achieves an average percentage of successful support recovery that is at least as large as BP, and sometimes larger. 
 
\item The third set of simulations evaluates the robustness of LiRE against noise in the Gaussian additive model $$
    \by=\bphi \bxs +\bz.
    $$ By repeating the second set of experiments but now with LiRE$\circ$OMP against the LASSO solution, we observe that LiRE$\circ$OMP is superior to LASSO as the noise level increases even though computing the LASSO has a computational cost that is at least quadratic in~$d$ \cite{efron2004least}. 
    \item The fourth set of simulations evaluates the performance of LiRE as a standalone support recovery algorithm with a random initialization. Results show that in terms of percentage of exact support recovery LiRE lies between OMP and BP.
\end{itemize}

\subsection{Related works}\label{relwo}
The problem of solving the under-determined system of equations given by \eqref{compsens} to recover the planted solution $\bxs$, a.k.a. compressed sensing, has a vast literature. By assuming some structural properties of $\bx^*$, there exists a unique solution to \eqref{compsens} which may be found efficiently depending on $\bphi$. The most common property is that $\bx^*$ is $m-$sparse. Finding the sparsest solution 
$$\min_{\bx:\bphi \bx=\by} ||\bx||_0$$ is a non-convex NP-hard problem~\cite{natarajan1995sparse}, but convex optimization can recover $\bx^*$ if the design matrix $\bphi$ satisfies certain conditions. For instance, in \cite{RIP} it is shown that if $\bphi$ satisfies a certain RIP condition, then the sparsest solution corresponds to $\bxs$ but also corresponds to the solution of the convex optimization problem $\min_{\bx:\bphi \bx=\by} ||\bx||_1$, a.k.a. Basis Pursuit. For alternative structural properties and related conditions see, {\it{e.g.}},  \cite{negahban2012unified,agarwal2010fast,bora2017compressed}. 

Over the past fifteen years, a significant amount of work has gone  into the design of ever more efficient reconstruction algorithms, and conditions on $\bphi$ under which reconstruction is possible, see, {\it{e.g.}}, \cite{BP,OMP,cosamp,1, blumensath2009iterative,foucart2011hard,4,khanna2017iht,new_shen2018least}. The performance of these algorithms is typically quantified in terms of computational complexity and the sparsity-undersampling tradeoff, that is the $n$ vs. $m$ curve (at fixed $d$) that characterizes the regimes where support recovery is attained, possibly within some prescribed distortion  \cite{incoherence,RIP,RSC,somani2018support,jain2017partial,zhao2019analysis,foucart2019iterative,aeron2010information,6142088}. Recall that in high dimensions (say, $d=\omega(n^3)$), the computational bottleneck for computing $\bxs$ lies in finding $\bss$. Given $\bss$, the signal $\bx^*$ is obtained by minimizing $||\by-\bphi \bx ||_2$ over all $\bx$ with support $\bss$. In turn, this least square estimate is equal to Penrose's pseudo-inverse $\bphi^{\dagger}$ of $\bphi$ applied to $\by$, which can be performed with order $\cO(m^2n)$ computations using direct methods ({\it{e.g.}}, QR factorization) or cost $\cO(mn)$ using approximation methods, {\it{e.g.}}, Richardson's method (see, {\it{e.g.}}, \cite[Section 5.1]{cosamp}).

There are two main categories of signal reconstruction algorithms. In the first category, algorithms attempt to solve a convex relaxation of the original non-convex optimization problem, similarly to BP. In the second category, a support estimate is first constructed, typically in an iterative and greedy manner, then the signal is estimated. Prominent algorithms here include OMP and CoSaMP. 
It is generally accepted that algorithms based on convex relaxations achieve better sparsity-undersampling performance than greedy algorithms at the cost of an increase in computational complexity---see, for instance,  \cite{new_donoho2012sparse} for complexity/performance comparison of BP and OMP. Beyond greedy algorithms and in the quest of ever faster reconstruction algorithms, a line of works takes advantage of  distributed and parallel computation to further reduce computational time (see, {\it{e.g.}}, \cite{mirzasoleiman2013distributed,khanna2017scalable}).

Although many recovery algorithms
have been shown to perform well in certain settings, the tradeoff between reconstruction performance
and computational complexity remains elusive in general. The present work is a further exploration of this tradeoff by providing a means to increase reconstruction performance at the cost of a relatively small computational overhead. 

\begin{remark}
This paper is an extended version of the ISIT $2021$ conference submission \cite{mehrabitchamkertenisit}. The main difference with the present paper is that the ISIT submission states Theorem~\ref{thm1} but without proof. The proof consists of a sequence of eight Lemmas and one proposition which shed light on the role of the list, the key component of LiRE. The ISIT version did not include Corollary~\ref{corollary1} and Example~\ref{example}, and included only part of the simulations presented here. In particular, simulations pertaining to LiRE's robustness to noise and pertaining to LiRE as a standalone support recovery algorithm are not present in the ISIT submission.
\end{remark}

\subsection{Paper organization}
We end this section with notational conventions. In Section~\ref{mainres}, we introduce LiRE and state sufficient conditions under which LiRE corrects all errors made by the baseline algorithm. In Section~\ref{numsim}, we provide experimental results. In Section~\ref{canalysis}, we prove the  results of Section~\ref{mainres}, and in Section~\ref{crem} we draw concluding remarks and outline open problems. 

\subsection{Notational conventions }
The set of possible features is  denoted as $\cFd \defeq\{1,2,...,d\}$. A \textit{support vector} refers to a vector whose entries in $ \cFd$ are listed the \textit{ascending order}. Given a support vector $\bs$, $\bs[i]$ denotes the $i$th entry of $\bs$, $\bs[-i]$ denotes the vector obtained by removing the $i$th entry of $\bs$, and $\bar{\bs}$ denotes a vector with entries in $\cFd$ and not in $\bs$. We use $j \in \bs$ whenever $\bs[i]=j$ for some $i\in \{1,2,...,d\}$. The length and the support of a vector $\bx\in \mathbb{R}^d$ are denoted as $ |\bx|$ and $\text{supp}(\bx)$, respectively. Vector $ \bx$ is said to be $m$-sparse if $|\text{supp}(\bx)|\leq m$. Given support vectors $\br$ and $\bs$, we use $\br\backslash \bs$ to denote the support vector whose entries belong to $\br$ and not to $\bs$. We write $\br \subset \bs$ whenever $\br\backslash \bs$ is the null vector. Further, we use $\br \cap \bs$ to denote the support vector whose entries appear in both $\br$ and $\bs$, and use $\br \cup \bs$ to denote the support vector whose entries appear in $\bs$ or $\br$. 
For example, given support vectors $\br= [1,2,3,4]$ and $ \bs=[1,4,5]$ in  $\mathcal{F}_6$, we have $\br[-1]=[2,3,4], \bs\backslash \br=[5], \bs \cap \br=[1,4]$, $\br \cup \bs=[1,2,3,4,5]$, and that $\br[-1]$ and $ \bs\backslash\br$ are disjoint. 

Throughout the paper $\bphi$ refers to an $n\times d$ real matrix and we use $\bphi_{\bs}$ to denote the matrix $\bphi$ restricted to the set of columns indexed by the entries of $\bs$. The transpose of matrix $\bphi$ is denoted as $\bphi^\sT $ and $\bphi_\bs^\sT $ denotes $(\bphi_\bs)^\sT $.

Given $\bpsi\in \mathbb{R}^{n\times m}$, the (least square) residual of $\by\in \mathbb{R}^n$ is defined as
\begin{align*}
    \lsr{\by}{\bpsi} \defeq \by-\bpsi\cdot \argmin\limits_{\bz \in \mathbb{R}^m}||\by-\bpsi \bz ||_2. 
\end{align*}
Recall that if $\bpsi^\sT \bpsi$ is invertible then $$\lsr{\by}{\bpsi}=(I-P\{ \bpsi\})\by$$ where $$P\{\bpsi\}\defeq \bpsi(\bpsi^\sT \bpsi)^{-1}\bpsi^\sT $$ is the projection operator (see, {\it{e.g.}}, \cite{boyd2004convex}).

With a slight abuse of notation, the residual with respect to a support vector $\bs$ is defined as $$\lsr{\by}{\bs}\defeq\lsr{\by}{\bphi_{\bs}}.$$

The support vector consisting of the $\ell$ most correlated features with respect to $\lsr{\by}{\bs}$ is defined as
$$ \cL(\ell,\bs )\defeq\argmax\limits_{\bq, |\bq|=\ell}^{} ||\bphi_\bq^\sT  \by^{\perp \bs}||_1 \,, $$
where the maximum is intended to be over support vectors in $\cFd$.
The least square estimate with respect to a support vector $\bs$ is defined as 
$$\cE(\bs)\defeq \argmin\limits_{\bx \in \mathbb{R}^d, \text{supp}(\bx) \subseteq \bs}^{}||\by-\bphi\bx||_2\,.$$

\section{List Regression Error-Correction (LiRE)}\label{mainres}
Consider the noiseless linear model \begin{align} \label{model}
    \by=\bphi\bxs\end{align} with $\bxs \in \mathbb{R}^d$, $\by \in \mathbb{R}^{n}$, $n\leq d$, and where the design matrix $\bphi \in \mathbb{R}^{n\times d}$ is assumed to have unit $l_2$ columns without loss of generality. Given $\bphi$, $\by$, and knowing that $$|\bss|\defeq |\text{supp}(\bxs)|\leq m\leq n,$$ we want to find $\bs$ such that $\bss \subseteq \bs$ and $|\bs|=m$.

 Given an initial estimate $\bs_{\text{in}}$ of $\bs^*$ that potentially misses true features, LiRE attempts to produce a second estimate $\bs_{\text{out}}$ that contains all true features, that is   $ \bs_{\text{out}}  \supseteq\bs^*$. At the heart of LiRE is a leave-one-out procedure for error-correction purpose which checks, for each feature of the initial support estimate,  whether it can be replaced by a better one. The pseudo-code of LiRE is given below and is followed by comments in light of existing support recovery algorithms:


\begin{algorithm}[H]
   \caption{LiRE (List Regression Error-Correction) } 
   \label{alg1}
\begin{algorithmic}
   \STATE {\bfseries Input:} 
      \STATE $\bullet$ $n \times d$ real design matrix $\bphi$
   \STATE $\bullet$ $n$ dimensional data vector $\by$
   \STATE $\bullet$ upper bound $m$ on support size of $\bxs$  
   \STATE $\bullet$  list size $1\leq \ell\leq m$ (internal parameter)
   \STATE $\bullet$ initial estimate $\bs_{\text{in}}$ of $\bs^*$ with $|\bs_{\text{in}}|\leq m$
  \STATE {\bfseries Initialization:}
  \STATE   If $|\bs_{\text{in}}| < m$, add any $m-|\bs_{\text{in}}|$ features from $\cFd$ to $\bs_{\text{in}}$ such that $|\bs_{\text{in}}|=m$. Set $\bs_{\text{out}}=\bs_{\text{in}}$.
  \STATE {\bfseries Procedure:}
 \FOR{$i=1$ {\bfseries to} $m$}
\STATE   $\bullet$ If $\by^{\perp \bs_{\text{out}}[-i]}=0$, exit the {\bf{for}} loop. Else, find the $\ell$ most correlated features with respect to $\by^{\perp \bs_{\text{out}}[-i]}$:  
   $$ \bl\defeq\cL(\ell,\bs_{\text{out}}[-i]) .$$
\STATE  $\bullet$ Compute the least square estimate with respect to features $\bs_{\text{out}}[-i]\cup \bl $:
$$\hat{\bx}\defeq \cE(\bs_{\text{out}}[-i]\cup \bl)$$
\STATE $\bullet$ Pick $j \in \bl$  such that $|\hat{\bx}_j|=||\hat{\bx}_{\bl}||_{\infty}$. 
\STATE $\bullet$ Replace $\bs_{\text{out}}[i]$ by $j$. 
\ENDFOR
   \STATE{\bfseries Output:}
   \STATE $\bullet$ Support vector $\bs_{\text{out}}$
\end{algorithmic}
\end{algorithm}

Feature $i$ is first removed from the current support estimate and the residual is computed. If the residual is non-zero,  the $\ell$ features that are most correlated with the residual are added to the support. This results in an expanded support of size $m-1+\ell$, from which a signal estimate is computed. Finally, feature $i$ is replaced with the most relevant feature of this estimate, restricted to the list elements---in particular, feature $i$ could replace itself if it belongs to the list.

The support expansion through the list is reminiscent  of the signal proxy formation in several greedy algorithms, including OMP, gOMP \cite{wang2012generalized}, and CoSaMP. This step, however, serves here the purpose of error-correction as it allows to test whether a particular feature should be replaced or not. Intuitively, a wrong feature is less likely to be corrected if the list size is small. But a correct feature is also more likely to get replaced by a wrong feature if the list is large. Accordingly, the theoretical  guarantees for successful error-correction provided below (Section~\ref{difsuf}) tie the number of errors and the list size in an attempt to strike a balance between these two types of error.

\subsection{Computational complexity of LiRE} 

For each of its $m$ rounds, LiRE involves:
\begin{itemize}
    \item 
Two least square problems, cost $\cO(m^2n)$ using direct methods or cost $\cO(mn)$ using approximation methods (see Section~\ref{relwo}).
\item $d$ inner products of $n$ dimensional vectors, cost $\cO(nd)$.
\item Two sortings of $d$ numbers, cost $\cO(d\log d)$ ({\it{e.g.}}, by Merge Sort).
\end{itemize}
 Hence,  LiRE's computational cost is $\cO(mnd)$, with the restriction $m\leq \sqrt{d}$ if we use direct methods for the least square problems.
 Notice that the second and third computations can be performed efficiently through parallelization.

\subsection{Sufficient conditions for error-correction}\label{difsuf}
Theorem~\ref{thm1} below provides a sufficient condition under which one pass of LiRE recovers all missed true features of the initial support estimate. 
\begin{theorem}
\label{thm1}
Fix integers $1\leq \ell\leq m\leq n \leq d$, 
and consider the model \eqref{model} for a given design matrix $\bphi\in \mathbb{R}^{n\times d}$. Let $\bs_{\text{in}}$, with $|\bs_{\text{in}}|\leq m$, denote an estimate of the true support $\bs^*$, let $e\defeq |\bss \backslash  \bs_{\text{in}}|$ denote the number of missed true features, and let $$t\defeq\max\{m+e, \ell+e+1 \},$$
$$\eta_t\defeq \frac{\sqrt{2}\delta_t(1-\delta_t^2)}{(1-\delta_t-\delta_t^2)(1-2\delta_t)},$$
where $\delta_t$ denotes the order-$t$ RIP constant of $\bphi$.\footnote{Given an integer $t\geq 1$, the order-$t$ Restricted Isometry Property (RIP) constant $\delta_t$ of matrix $\bphi$ is defined as the smallest $\delta$ such that the inequality 
\begin{align*}
 (1-\delta)||\bx||_2^2 \leq ||\bphi\bx||_2^2\leq (1+\delta)||\bx||_2^2
\end{align*}
holds for all $t$-sparse vectors $\bx$. }

Suppose that $e$, $\ell$, and $\bphi$ satisfy the following inequalities:
\begin{align}\label{elcond}
  \ell\leq \max\{e,1\}\end{align}
\begin{align}\label{sqe}
\sqrt{e+1} \leq \frac{\sqrt{2}(1-\delta_t-\delta_t^2)(1-2\delta_t) }{\delta_t(1+\delta_t)(1+2\delta_t-\delta_t^2)}\end{align}
\begin{align}\label{sqe2}\sqrt{\ell} > \frac{(1-\delta_t^2+\delta_t)\eta_t\sqrt{e+1}-1+\delta_t}{1-\delta_t-\delta_t\eta_t\sqrt{e+1}}\sqrt{e+1}\end{align}
 \begin{align}\label{sqe3}
 \delta_{\ell+m-1}<0.5.\end{align}
 
 Then, given $\bs_{\text{in}}$, LiRE with list size $\ell$ outputs $\bs_{\text{out}}$ such that $\bs_{\text{out}} \supset \bss$ and $|\bs_{\text{out}}|= m$.
\end{theorem}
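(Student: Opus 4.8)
The plan is to track the support estimate $\bs_{\text{out}}$ through the $m$ rounds of the \textbf{for} loop and to show it never deteriorates, so that after the last round it contains $\bss$. Regard $\bs_{\text{out}}$ as a length‑$m$ tuple whose $i$‑th entry is overwritten in round $i$, so that each entry is visited exactly once, and assume without loss of generality that $\bxs$ has exactly $|\bss|$ nonzero coordinates; since the initialization, which pads $\bs_{\text{in}}$ to size $m$, cannot increase the number of missed true features, it suffices to argue from the padded support, whose number of missed features is some $e'\le e$. I would establish, by induction on $i\in\{0,1,\dots,m\}$, the invariant: after round $i$, $|\bs_{\text{out}}|=m$ and either (a) $\bss\subseteq\bs_{\text{out}}$, or (b) $\bs_{\text{out}}[1],\dots,\bs_{\text{out}}[i]\in\bss$ and $|\bss\setminus\bs_{\text{out}}|\le e'$. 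Alternative (a) is absorbing, because if $\bss\subseteq\br$ for $\br\defeq\bs_{\text{out}}[-i]$ then $\lsr{\by}{\br}=0$ and the loop exits without further change; and at $i=m$ alternative (b) forces all $m$ distinct entries of $\bs_{\text{out}}$ into $\bss$, hence $|\bss|\ge m$ and $\bs_{\text{out}}=\bss$. Either way $\bss\subseteq\bs_{\text{out}}$ with $|\bs_{\text{out}}|=m$, which is the claim.

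The heart of the induction is the step inside alternative (b). Write $\br\defeq\bs_{\text{out}}[-i]$ ($|\br|=m-1$), let $e''\defeq|\bss\setminus\br|$ (so $e''=e'$ if entry $i$ was false and $e''=e'+1\le e+1$ otherwise), and suppose $\lsr{\by}{\br}\ne0$. I must show the selected $j\in\bl\defeq\cL(\ell,\br)$ lies in $\bss$. Two facts are immediate: columns indexed by $\br$ are orthogonal to $\lsr{\by}{\br}$, so $\bl$ is disjoint from $\br$ and $\bl\cap\bss=\bl\cap(\bss\setminus\br)$; and $\lsr{\by}{\br}=(I-P\{\bphi_\br\})\bphi_{\bss\setminus\br}\bxs_{\bss\setminus\br}$. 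The foundational lemma is an OMP‑type selection bound: under \eqref{sqe}, the globally most‑correlated feature lies in $\bss\setminus\br$, obtained by comparing a RIP lower bound on $|\bphi_q^\sT\lsr{\by}{\br}|$ for a largest‑magnitude missed feature (of order $|\bxs_q|$ up to a $\delta_t$‑weighted cross term of size $\cO(\sqrt{e+1})\,|\bxs_q|$) with the RIP upper bound $\delta_t\|\bxs_{\bss\setminus\br}\|_2$ that holds for every false feature, the former prevailing precisely because of \eqref{sqe}. Consequently $G\defeq\bl\cap\bss=\bl\cap(\bss\setminus\br)$ is nonempty and contains a missed feature $q_1$ with $|\bxs_{q_1}|$ of order $\|\bxs_{\bss\setminus\br}\|_2/\sqrt{e+1}$; this already settles $\ell=1$, since then $j=q_1\in\bss$.

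For $\ell\ge2$ one must argue that the subsequent least‑squares step still selects a true feature. Restricting $\hbx\defeq\cE(\br\cup\bl)$ to $\bl$ (the only part that matters), the partitioned‑regression identity gives that $\hbx_{\bl}$ is the least‑squares coefficient of $\lsr{\by}{\br}$ on the projected columns $(I-P\{\bphi_\br\})\bphi_{\bl}$; invertibility of the $(\ell+m-1)$‑column Gram matrix is exactly \eqref{sqe3}. Splitting $\bss\setminus\br=G\sqcup M'$ with $M'\defeq(\bss\setminus\br)\setminus\bl$, one gets $\hbx_q=\bxs_q+\bm\epsilon_q$ for $q\in G$ and $\hbx_q=\bm\epsilon_q$ for $q\in\bl\setminus\bss$, where $\bm\epsilon$ is supported on $\bl$; since the list captures at least $q_1$, one has $|M'|\le e$, so every RIP order needed to bound $\bm\epsilon$ is $\le t$, and a standard estimate yields $\|\bm\epsilon\|_2\le\cO(\delta_t)\,\|\bxs_{M'}\|_2\le\cO(\delta_t)\,\|\bxs_{\bss\setminus\br}\|_2$. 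Hence $\max_{q\in\bl\cap\bss}|\hbx_q|\ge|\bxs_{q_1}|-\|\bm\epsilon\|_2>\|\bm\epsilon\|_2\ge\max_{q\in\bl\setminus\bss}|\hbx_q|$, the middle strict inequality holding by \eqref{sqe}, so the maximizing index $j$ lies in $\bss$. Here the list‑size bounds \eqref{elcond}--\eqref{sqe2} enter quantitatively: \eqref{elcond} ($\ell\le\max\{e,1\}$) keeps the invoked RIP orders $\le t$ and prevents the list from swallowing false features able to beat $q_1$, while \eqref{sqe2} (through $\eta_t$) makes the list long enough for the energy retained by $G$ to drive the final inequality uniformly. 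Given $j\in\bss$, the invariant updates routinely: overwriting entry $i$ by $j$ keeps $\bs_{\text{out}}[1],\dots,\bs_{\text{out}}[i]\in\bss$ and $|\bss\setminus\bs_{\text{out}}|$ stays $\le e'$ (strictly decreasing if entry $i$ was false), while the degenerate case $\lsr{\by}{\br}=0$ lands in alternative (a).

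The step I expect to be hardest is making the list/RIP bookkeeping uniform: quantifying, over all admissible $\br$, how much missed‑feature energy a greedy top‑$\ell$ list must retain and how large its uncaptured set $M'$ can be, and then calibrating $\ell$ to be simultaneously large enough for the bound on $\bm\epsilon$ to yield the final strict inequality (\eqref{sqe2}), small enough — at most $e$ — both to keep every invoked RIP order $\le t$ and to bar false features from winning the selection step (\eqref{elcond}), with $\delta_t$ small enough that a true feature's correlation survives the $\cO(\sqrt{e+1})$ cross‑term inflation (\eqref{sqe}); the constant $\eta_t$ is precisely the device that reconciles these. The remaining ingredients — disjointness $\bl\cap\br=\varnothing$, the partitioned‑regression identity, and the RIP norm estimates — are routine.
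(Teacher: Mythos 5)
Your high-level scaffolding (induction over the $m$ rounds, the invariant that no new errors are introduced, the absorbing case where the residual vanishes, and the reduction to showing that the selected $j$ lies in $\bss$ at each round) matches the paper's proof. But the core of your argument — the part that actually certifies $j\in\bss$ when $\bss\not\subset\bp$ — is a genuinely different mechanism from the paper's, and it does not follow from the stated hypotheses. Your ``foundational lemma'' asserts that under \eqref{sqe} the single most correlated feature with $\lsr{\by}{\br}$ lies in $\bss\setminus\br$, and your final step needs $|\bxs_{q_1}|-\|\bm{\epsilon}\|_2>\|\bm{\epsilon}\|_2$. Tracking the constants: the best true feature's correlation is bounded below by roughly $\frac{1}{\sqrt{e+1}}\|\bxs_{\bss\setminus\br}\|_2-c_1\delta_t\|\bxs_{\bss\setminus\br}\|_2$ and every false feature's by above by roughly $c_2\delta_t\|\bxs_{\bss\setminus\br}\|_2$ with $c_1,c_2\to 1$ as $\delta_t\to 0$, so the top-$1$ selection guarantee requires $\sqrt{e+1}\lesssim 1/\delta_t$; likewise $\|\bm{\epsilon}\|_2\lesssim\frac{\delta_t}{1-2\delta_t}\|\bxs_{\bss\setminus\br}\|_2$ forces $\sqrt{e+1}\lesssim 1/(3\delta_t)$ for your strict inequality. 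Condition \eqref{sqe}, however, only guarantees $\sqrt{e+1}\leq(\sqrt{2}+o(1))/\delta_t$. In the regime where $e$ sits near the boundary of \eqref{sqe}, the top of the list can be a false feature, $G=\bl\cap\bss$ can a priori be empty, and both of your key inequalities fail. So your route, even made fully rigorous, proves a version of the theorem with a constant-factor more stringent bound on $e$, not the theorem as stated.

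The symptom of the mismatch is that your argument makes no essential use of the list-size lower bound \eqref{sqe2}: in your sketch the decisive inequality is attributed to \eqref{sqe} alone, and your bound $\|\bm{\epsilon}\|_2\leq\cO(\delta_t)\|\bxs_{M'}\|_2\leq\cO(\delta_t)\|\bxs_{\bss\setminus\br}\|_2$ discards the only place where a large $\ell$ could help. The paper's proof is built the other way around: it never claims the list contains a true feature a priori. Instead it assumes $j\notin\bss$ and derives two incompatible relations between $\|\bxs_{\bl\cap\bss}\|_2$ and $\|\bxs_{\bss\backslash\bp}\|_2$ — an upper bound (Lemma~\ref{lemma6}, from the pruning step: the winning coefficient of a false $j$ is small by Lemma~\ref{lemma5}, yet $\|\hbx_{\bl}\|_\infty$ is large, so most of the missed energy must already sit in $\bl\cap\bss$) and a lower bound (Lemma~\ref{lemma7}, where the defining property of $\cL(\ell,\bs)$ yields $\|\bphi_{\bl}^\sT\lsr{\by}{\bs}\|_2\geq\sqrt{\ell/|\bss\backslash\bp|}\,\|\bphi_{\bss\backslash\bp}^\sT\lsr{\by}{\bs}\|_2$, which is exactly where $\sqrt{\ell}$ and hence \eqref{sqe2} enter). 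Proposition~\ref{prop1} plays these against each other to reach a contradiction, and it is this contradiction structure — not a direct OMP-style selection bound — that lets the theorem survive with the weaker constant in \eqref{sqe}. To repair your proof you would either have to strengthen \eqref{sqe} (changing the theorem) or import the paper's list-energy argument to replace your top-$1$ selection claim.
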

A more explicit sufficient condition for error correction is obtained by choosing  $\ell=\max\{e,1\}$, which implies $t\leq m+e+2$ since $e\leq m$, and $\ell\leq e+1$. Condition \eqref{sqe} in Theorem~\ref{thm1} with strict inequality implies Condition~\eqref{sqe2}. Furthermore, Condition~\eqref{sqe3} is implied by the condition $\delta_{m+e}< 0.5$ since $\ell\leq e+1$. It then follows: 
\begin{corollary}\label{corollary1} LiRE with list size $\ell=\max\{e,1\}$ corrects exactly $e$ errors ({\it{i.e.}}, $|\bs_{\text{out}}|=m$, $\bs_{\text{out}} \supset \bss$, $|\bss \backslash  \bs_{\text{in}}|= {e}$) if $\bphi$ satisfies the following RIP conditions:
$$\delta_{m+e}< 0.5,$$ 
\begin{align}\label{strine}e+1< \left(\frac{\sqrt{2}(1-\delta_{m+e+2}-\delta_{m+e+2}^2)(1-2\delta_{m+e+2}) }{\delta_{m+e+2}(1+2\delta_{m+e+2}-\delta_{m+e+2}^2)(1+\delta_{m+e+2})}\right)^2.\end{align}
\end{corollary}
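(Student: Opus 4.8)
The plan is to obtain Corollary~\ref{corollary1} as a direct specialization of Theorem~\ref{thm1}, taking the list size to be $\ell=\max\{e,1\}$ and checking that the two stated RIP conditions, $\delta_{m+e}<0.5$ and \eqref{strine}, force all four hypotheses \eqref{elcond}--\eqref{sqe3}. Condition \eqref{elcond} holds with equality by the choice of $\ell$, so nothing is needed there. For the order $t$ of the theorem, $e\le m$ gives $\ell=\max\{e,1\}\le e+1$, hence $\ell+e+1\le m+e+1$ and $t=\max\{m+e,\ell+e+1\}\le m+e+2$; likewise $\ell+m-1\le m+e$. Since the RIP constants $\delta_k$ are non-decreasing in $k$, the hypothesis $\delta_{m+e}<0.5$ yields $\delta_{\ell+m-1}\le\delta_{m+e}<0.5$, which is \eqref{sqe3}, and also $\delta_t\le\delta_{m+e+2}$.

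Next I would handle \eqref{sqe}. Regarded as a function of its RIP argument $\delta$ on $(0,1/2)$, the right-hand side of \eqref{sqe} is decreasing: the numerator $\sqrt{2}(1-\delta-\delta^2)(1-2\delta)$ is a product of positive decreasing factors on that interval, while the denominator $\delta(1+\delta)(1+2\delta-\delta^2)$ is a product of positive increasing factors. Combined with $\delta_t\le\delta_{m+e+2}$, the strict inequality \eqref{strine} --- which is exactly \eqref{sqe} with ``$<$'' evaluated at order $m+e+2$ --- then implies the strict form of \eqref{sqe} at order $t$.

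It remains to derive \eqref{sqe2} from the strict form of \eqref{sqe} together with $\ell=\max\{e,1\}$. The useful algebraic fact is that multiplying the right-hand side of \eqref{sqe} --- call it $B$ --- by $\eta_t$ causes a large cancellation: with $a=\delta_t$ one finds $\eta_t B=\frac{2(1-a)}{1+2a-a^2}$, so the strict form of \eqref{sqe} is equivalent to $\eta_t\sqrt{e+1}<\frac{2(1-a)}{1+2a-a^2}$. From this I would first check that the denominator $1-a-a\,\eta_t\sqrt{e+1}$ of the fraction in \eqref{sqe2} is positive (using $\frac{2(1-a)}{1+2a-a^2}\le\frac{1-a}{a}$, valid for $a\le 1$), then substitute the bound on $\eta_t\sqrt{e+1}$ into that fraction and use $\sqrt{\ell}=\sqrt{e}$ for $e\ge1$ (the case $e=0$ being immediate) to reduce \eqref{sqe2} to an elementary inequality between rational functions of $a$ and $e$. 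With \eqref{elcond}--\eqref{sqe3} in force and $\bs_{\text{in}}$ missing exactly $e$ true features, Theorem~\ref{thm1} then outputs $\bs_{\text{out}}\supset\bss$ with $|\bs_{\text{out}}|=m$, as claimed.

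The main obstacle is this last reduction. The fraction on the right of \eqref{sqe2} is a ratio of two quantities that are both small and positive in the regime of interest, so the bound coming from strict \eqref{sqe} must be tracked with care through the cross-multiplication; in particular, unlike the routine monotonicity arguments used for \eqref{sqe} and \eqref{sqe3}, here the strictness of \eqref{sqe} (rather than the weak inequality) is what must be exploited in order to separate the two sides of \eqref{sqe2}.
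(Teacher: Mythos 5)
Your overall route is exactly the paper's: the paper justifies the corollary in a single sentence by the same specialization $\ell=\max\{e,1\}$, the same bound $t\le m+e+2$, the same monotonicity arguments for \eqref{sqe} and \eqref{sqe3}, and the bare assertion that strict \eqref{sqe} implies \eqref{sqe2}. Your handling of \eqref{elcond} and \eqref{sqe3}, the monotonicity of the right-hand side of \eqref{sqe} in the RIP constant, and the cancellation $\eta_t B=\tfrac{2(1-\delta_t)}{1+2\delta_t-\delta_t^2}$ (with $B$ the right-hand side of \eqref{sqe}) are all correct and are indeed the computations implicit in the paper's one-line argument.

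However, the step you flag as ``the main obstacle'' is a genuine gap, and it does not close as sketched. Writing $u=\eta_t\sqrt{e+1}$ and $u^*=\eta_t B$, one checks that at $u=u^*$ the numerator $(1+\delta_t-\delta_t^2)u-(1-\delta_t)$ and the denominator $(1-\delta_t)-\delta_t u$ of the fraction in \eqref{sqe2} both equal $(1-\delta_t)^2(1+\delta_t)/(1+2\delta_t-\delta_t^2)$, so the fraction equals exactly $1$ there; the strictness of \eqref{strine} therefore yields only ``fraction $<1$'', hence a bound of $\sqrt{e+1}$ on the right-hand side of \eqref{sqe2}, while the left-hand side is $\sqrt{\ell}=\sqrt{e}$ for $e\ge 1$. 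The gap between $\sqrt{e}$ and $\sqrt{e+1}$ is not bridged by strictness alone: for $e=1$ and $\delta_t=\delta_{m+3}=0.227$ (just inside \eqref{strine}, whose threshold is $\approx 0.2277$) the fraction evaluates to $\approx 0.975$ and the right-hand side of \eqref{sqe2} to $\approx 1.38>1=\sqrt{\ell}$, so \eqref{sqe2} fails even though all the hypotheses of the corollary hold. Closing the argument requires either a quantitatively stronger hypothesis than \eqref{strine} (e.g.\ an extra constant factor in the bound on $e+1$) or exploitable slack between $\delta_t$ and $\delta_{m+e+2}$, neither of which your proposal supplies; note that the paper's own terse justification suffers from the same defect, so this is a flaw in the source as much as in your reconstruction.
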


Theorem~\ref{thm1} and Corollary~\ref{corollary1} assume that the number of errors to be corrected is known to be exactly $e$. If we want LiRE to correct {\it{any}} number of errors up to some number $\bar{e}\geq 1$, then by Condition \eqref{elcond} we should pick $\ell=1$ which yields the following result:



\begin{corollary}\label{corollary2}
LiRE with list size $\ell=1$ corrects up to $\bar{e}$ errors if $\bphi$ satisfies the following RIP conditions:
\begin{align}\label{deltm}\delta_m< 0.5,\end{align} 
\begin{align}\label{condcor}
     \bar{e}+1  \leq \left(\frac{(1-\delta_{m+\bar{e}+1}-\delta_{m+\bar{e}+1}^2)(1-2\delta_{m+\bar{e}+1})}{\delta_{m+\bar{e}+1}\sqrt{2}(1+\delta_{m+\bar{e}+1}-\delta_{m+\bar{e}+1}^2)(1+\delta_{m+\bar{e}+1})} \right)^2.
\end{align}
\end{corollary}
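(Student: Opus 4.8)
The plan is to derive Corollary~\ref{corollary2} directly from Theorem~\ref{thm1}: with the list size forced to $\ell=1$ (the only value compatible with \eqref{elcond} for every error count in $\{0,\dots,\bar e\}$), I would verify that the four hypotheses \eqref{elcond}, \eqref{sqe}, \eqref{sqe2}, \eqref{sqe3} hold for \emph{each} $e\leq\bar e$. Fix an arbitrary $\bs_{\text{in}}$ with $|\bs_{\text{in}}|\leq m$, put $e\defeq|\bss\backslash\bs_{\text{in}}|\leq\bar e$, and write $\delta\defeq\delta_{m+\bar e+1}$. Condition \eqref{condcor} forces $\delta<1/2$ (else the factor $1-2\delta$ makes its right-hand side nonpositive while $\bar e+1\geq 2$), hence $1-\delta-\delta^2>0$ as well; the degenerate case $\delta=0$ makes every RIP constant of order $\leq m+\bar e+1$ vanish and renders the hypotheses of Theorem~\ref{thm1} immediate, so I may assume $\delta\in(0,1/2)$. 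Since $\ell=1$, $t=\max\{m+e,\,e+2\}\leq m+e+1\leq m+\bar e+1$, so monotonicity of the RIP constants in the order gives $\delta_t\leq\delta<1/2$.

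Next I would dispose of the easy conditions. Condition \eqref{elcond} is $1\leq\max\{e,1\}$, automatic; condition \eqref{sqe3} is $\delta_{\ell+m-1}=\delta_m<1/2$, which is hypothesis \eqref{deltm}. For \eqref{sqe} introduce $B(u)\defeq\frac{(1-u-u^2)(1-2u)}{\sqrt2\,u(1+u)(1+u-u^2)}$, which is positive and decreasing on $(0,1/2)$ (positive decreasing numerator over positive increasing denominator); then \eqref{condcor} together with $e\leq\bar e$ and $\delta_t\leq\delta$ gives $\sqrt{e+1}\leq\sqrt{\bar e+1}\leq B(\delta)\leq B(\delta_t)$. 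A one-line check shows $B(\delta_t)$ is at most the right-hand side of \eqref{sqe}, the ratio being $\frac{1+2\delta_t-\delta_t^2}{2(1+\delta_t-\delta_t^2)}\leq 1$ (equivalently $1-\delta_t^2\geq0$), so \eqref{sqe} holds.

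The crux is \eqref{sqe2} with $\ell=1$: the shortcut ``strict \eqref{sqe} $\Rightarrow$ \eqref{sqe2}'' available for Corollary~\ref{corollary1} relied on $\ell=\max\{e,1\}$ and fails here since $\sqrt\ell=1$. Using $1-\delta_t^2=(1-\delta_t)(1+\delta_t)$ one checks the identity $\eta_t\,B(\delta_t)=\frac{1-\delta_t}{1+\delta_t-\delta_t^2}$, so the bound $\sqrt{e+1}\leq B(\delta_t)$ established above is exactly $q\leq\frac{1-\delta_t}{1+\delta_t-\delta_t^2}$, where $q\defeq\eta_t\sqrt{e+1}$. From this, $1-\delta_t-\delta_t q\geq\frac{(1-\delta_t)(1-\delta_t^2)}{1+\delta_t-\delta_t^2}>0$, so the denominator appearing in \eqref{sqe2} is positive and clearing it turns \eqref{sqe2} (with $\sqrt\ell=1$) into the equivalent inequality $q<\frac{(1-\delta_t)(1+x)}{(1+\delta_t-\delta_t^2)x+\delta_t}$ with $x\defeq\sqrt{e+1}\geq1$. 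Since $1+\delta_t-\delta_t^2\geq\delta_t$, strictly when $\delta_t<1$, we have $\frac{1+x}{(1+\delta_t-\delta_t^2)x+\delta_t}>\frac1{1+\delta_t-\delta_t^2}$ for all $x\geq0$, so the right-hand side exceeds $\frac{1-\delta_t}{1+\delta_t-\delta_t^2}\geq q$ and \eqref{sqe2} follows. With the four hypotheses verified, Theorem~\ref{thm1} yields $\bs_{\text{out}}\supseteq\bss$ with $|\bs_{\text{out}}|=m$; as $\bs_{\text{in}}$ was an arbitrary estimate with at most $\bar e$ missed features, LiRE with list size $1$ corrects up to $\bar e$ errors.

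The only delicate step is this last one: one must certify positivity of the denominator of \eqref{sqe2} before clearing it, and then notice that the constants and exponents in \eqref{condcor} are exactly what is needed for the \emph{coarse} bound $\frac{1+x}{(1+\delta_t-\delta_t^2)x+\delta_t}\geq\frac1{1+\delta_t-\delta_t^2}$ (which throws away all dependence on $x$) to suffice. Everything else is monotonicity of RIP constants, in the order and in $\delta$, together with elementary algebra.
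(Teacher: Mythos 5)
Your proposal is correct and follows essentially the same route as the paper's (much terser) proof: both verify the hypotheses of Theorem~\ref{thm1} for every $e\leq\bar e$ with $\ell=1$, using monotonicity of the RIP constant to show that \eqref{condcor} implies \eqref{sqe} and that it makes the right-hand side of \eqref{sqe2} non-positive --- your identity $\eta_t B(\delta_t)=\tfrac{1-\delta_t}{1+\delta_t-\delta_t^2}$ is exactly the computation behind the paper's one-line claim that \eqref{sqe2} becomes vacuous. One small slip: your parenthetical reason that \eqref{condcor} forces $\delta<1/2$ is not right, since the right-hand side of \eqref{condcor} is a square and hence nonnegative even when $1-2\delta\leq 0$; the conclusion is nevertheless true (for $\delta\in[1/2,1)$ a direct estimate shows that right-hand side stays below $2\leq\bar e+1$), and it is genuinely needed because $\delta_t<1/2$ underlies the positivity and monotonicity of $B$ and $\eta_t$ in the rest of your argument.
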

\begin{proof}[Proof of Corollary \ref{corollary2}]
Inequality \eqref{condcor} implies~\eqref{sqe} and renders inequality~\eqref{sqe2} vacuous as its right-hand side becomes non-positive. Finally, note that \eqref{sqe3} is satisfied if $\ell=1$ and \eqref{deltm} holds.
\end{proof}

For small values of $\delta$'s the upperbound \eqref{condcor} is about four times larger than the upperbound \eqref{strine}, but is sufficient to show that LiRE can actually correct errors beyond the regime of exact support recovery of certain baseline algorithms.  The following example shows that LiRE corrects a sublinear in $m$ number of errors in a regime where OMP may produce errors:
\begin{example}[OMP, sublinear number of errors]\label{example}
Corollary~\ref{corollary2} implies that LiRE corrects up to $\bar{e}$ errors if 
\begin{align}\label{ebarcondi}\delta_{m+1+\bar{e}}\leq \frac{(1+o(1))}{\sqrt{\bar{e}}}\qquad \text{as }\bar{e},m\to \infty \text{ with } \bar{e}\leq m.
\end{align}
On the other hand, the RIP necessary (and sufficient) condition for OMP to recover the support is (see \cite{7763885, wang2012recovery})\footnote{If \eqref{ompinequa} is reversed, then there exists design matrices for which OMP is not guaranteed to always recover $\bss$.} \begin{align}\label{ompinequa}\delta_{m+1}< \frac{1}{\sqrt{m+1}}.
\end{align}

Now, using the property $\delta_{a b}\leq b\cdot \delta_{2a}$ for positive integers $a$ and $b$ \cite[Corollary 3.4]{cosamp} with $a=(m+1)/2$ and $b=2(1+\bar{e}/(m+1))$, we deduce that in the  regime  $\bar{e},m\to \infty$ with $\bar{e}=o(m)$, Condition \eqref{ompinequa} implies
$$\delta_{m+1+\bar{e}} \leq \frac{2(1+o(1))}{\sqrt{m+1}},$$
which is is more stringent than \eqref{ebarcondi}. In summary, if the design matrix satisfies \eqref{ebarcondi}, OMP alone is not guaranteed to recover the support and LiRE will retrieve the missed features as long as their number is known and sublinear in the sparsity. 
\end{example}
In general, to quantify the benefits due to LiRE through Corollary~\ref{corollary2} we need to identify a meaningful regime, namely conditions on the design matrix under which the baseline algorithm potentially makes errors, and an upper bound on the number of errors. (Note that from Corollary~\ref{corollary2} it is unclear how LiRE performs when the actual number of errors is above $\bar{e}$ or different than $e$ for Theorem~\ref{thm1} and Corollary~\ref{corollary1}.)

Unfortunately, such conditions are hardly available. And even when they are, such as for OMP,  non-trivial upper bounds on the number of errors remain elusive.\footnote{Note that here we are interested in conditions on the design matrix under which a suboptimal algorithm produces at most a certain number of errors, in the worst-case over  $\bxs$. In fact, several works investigate the fundamental limitations of non-zero error support recovery, or recovery with distortion, in probabilistic setups (see, {\it{e.g.}}, \cite{aeron2010information,6142088}).} 
 Hence, to provide a practical assessment of the performance of LiRE without bounds on the number of errors we resorted to numerical simulations which are presented in the next section.

\section{Numerical experiments}\label{numsim}

\begin{figure*}
\begin{center}
\centerline{\includegraphics[scale=.7]{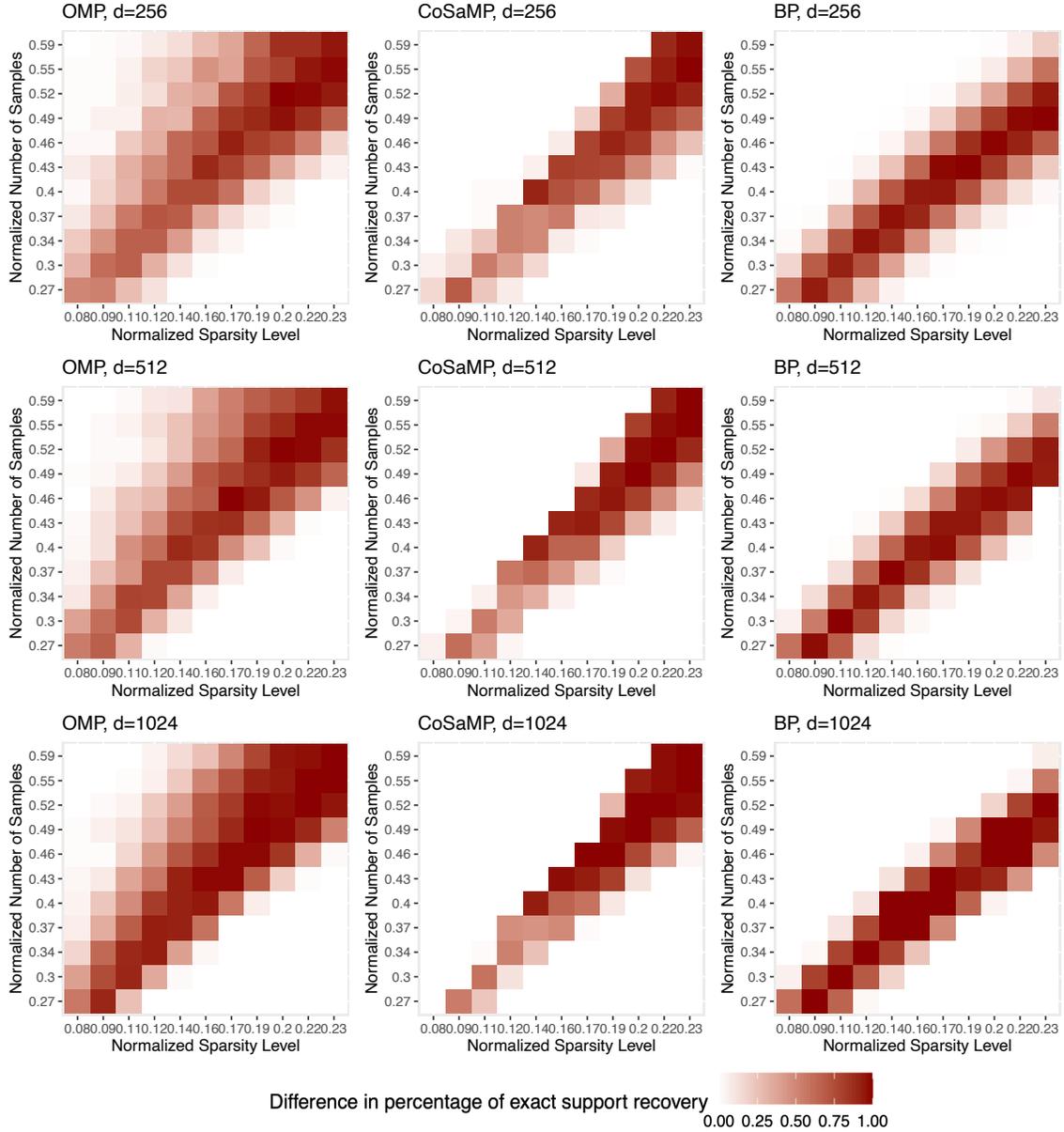}}
\caption{Improvement due to LiRE. Levels of red indicate the difference in the percentage of exact support recovery with and without LiRE, for OMP, CoSaMP, and BP---differences are always nonnegative. White cells indicate no improvement and dark red cells indicate regimes where LiRE allows to entirely recover the underlying support whereas the baseline algorithm alone makes errors.}\label{fig1}
\end{center}
\vskip -0.2in
\end{figure*}

The simulations described next evaluate the performance of LiRE both as an error correction module and as a standalone support recovery algorithm, with random Gaussian support and design matrices. Following the guidelines of reproducible research, the code for the simulations described below is availabe at:

\centerline{\url{https://github.com/mehrabi4/LiRE}.}
\subsection*{List size}
Theorem~\ref{thm1} suggests that the list size should be tied to the number of errors. Since this number is typically unknown, for all the numerical experiments we chose  $\ell$ empirically to be equal to $0.5m$ if $1.5m\leq n$, and $n-m$ otherwise.

\subsection{Improvement over OMP, CoSaMP, and BP}\label{improve}
 The first set of experiments shows how LiRE improves the support recovery of  OMP, CoSaMP, and BP for different normalized sparsity levels $m/d$ and number of samples $n/d$. Experiments were carried for $d\in \{256, 512, 1024\}$ for different values of $m$ and $n$ with step size of about $0.015d$ for $m$ and $0.03d$ for $n$. Given $m,n,d$, a design matrix $\bphi$ was first randomly generated with i.i.d. normal $\mathcal{N}(0,\frac{1}{n})$ entries. The signal $\bx$ was randomly generated with a uniformly chosen support of size $m$ and with independently generated i.i.d. $\mathcal{N}(0,1)$ entries. The percentages of perfect support recovery over $50$ experiments were computed with and without LiRE---each experiment is run with and without LiRE. For each experiment, LiRE was run $5$ consecutive times. The number of iterations of CoSaMP was chosen to be $d/4$.  In~Fig.~\ref{fig1}, levels of red indicate improvement---percentage of exact support recovery with LiRE minus percentage of exact support recovery without LiRE---from $0\%$ (white squares) to $100\%$ (dark red squares). \subsubsection*{Discussion} The lower white region in each plot of Fig.~\ref{fig1} corresponds to no recovery ($0\%$) with and without LiRE, whereas the upper white region corresponds to maximal ($100\%$) recovery with and without LiRE. Between these two regions LiRE improves recovery, up to $100\%$ for the dark red regions. We observe that LiRE never degrades performance, and numerical values at $d=1024$ reveal that LiRE reduces the number of samples needed to reach perfect ($100\%$) reconstruction by $30-40\%$ for OMP, $10-15\%$ for CoSaMP, and $15-25\%$ for BP. Notice that the dark red regions correspond to regimes where the baseline algorithm always produces errors, and LiRE correct them all.
 
The choice of running LiRE five times was based on empirical trials. This guaranteed a non-negative improvement of the percentage of exact support recovery across the entire range of sparsity levels (from $0.08$ to $0.23$). By contrast, we observed that fewer iterations could result in a slight decrease in performance for certain sparsity levels. This is possibly due to the fact that the list size was not optimized.

\subsection{LiRE$\circ$OMP versus BP}\label{bpLi}
The second set of experiments is perhaps the most interesting as it shows how LiRE may boost the performance of a low complexity reconstruction algorithm to achieve the performance of significantly more complex ones. We ran the same experiments as in the previous section but now compared BP against LiRE$\circ$OMP. Levels of red in Fig.~\ref{fig2} correspond to the percentage of exact support recovery of LiRE$\circ$OMP minus the percentage of exact support recovery of BP, with LiRE run $1$, $3$, and $5$ consecutive times. 

Fig.~\ref{fig22} shows the percentage of exact support recovery as a function of the number of measurements, for LiRE$\circ$OMP and BP, with LiRE run only once, $d=512$, and three sparsity levels ($35$, $70$, and $100$).

\begin{figure}
\begin{center}
\centerline{\includegraphics[scale=.7]{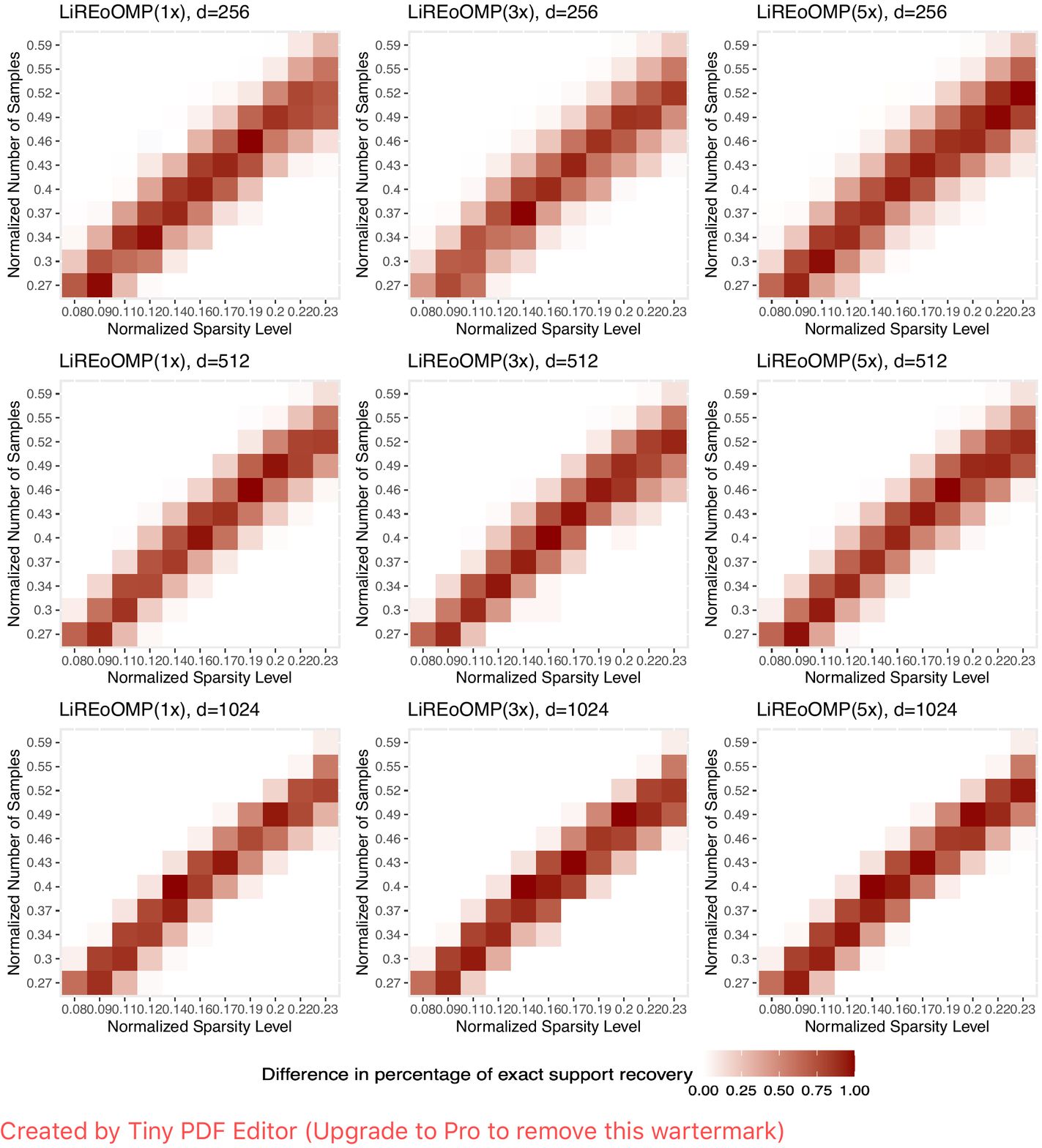}}
\caption{Same experiment as illustrated in Fig.~\ref{fig1}, but for  LiRE$\circ$OMP vs. BP, with $1, 3$ and $5$ iterations of LiRE.}\label{fig2} 
\end{center}
\vskip -0.2in
\end{figure}

 \begin{figure}
\begin{center}
\centerline{\includegraphics[scale=.5]{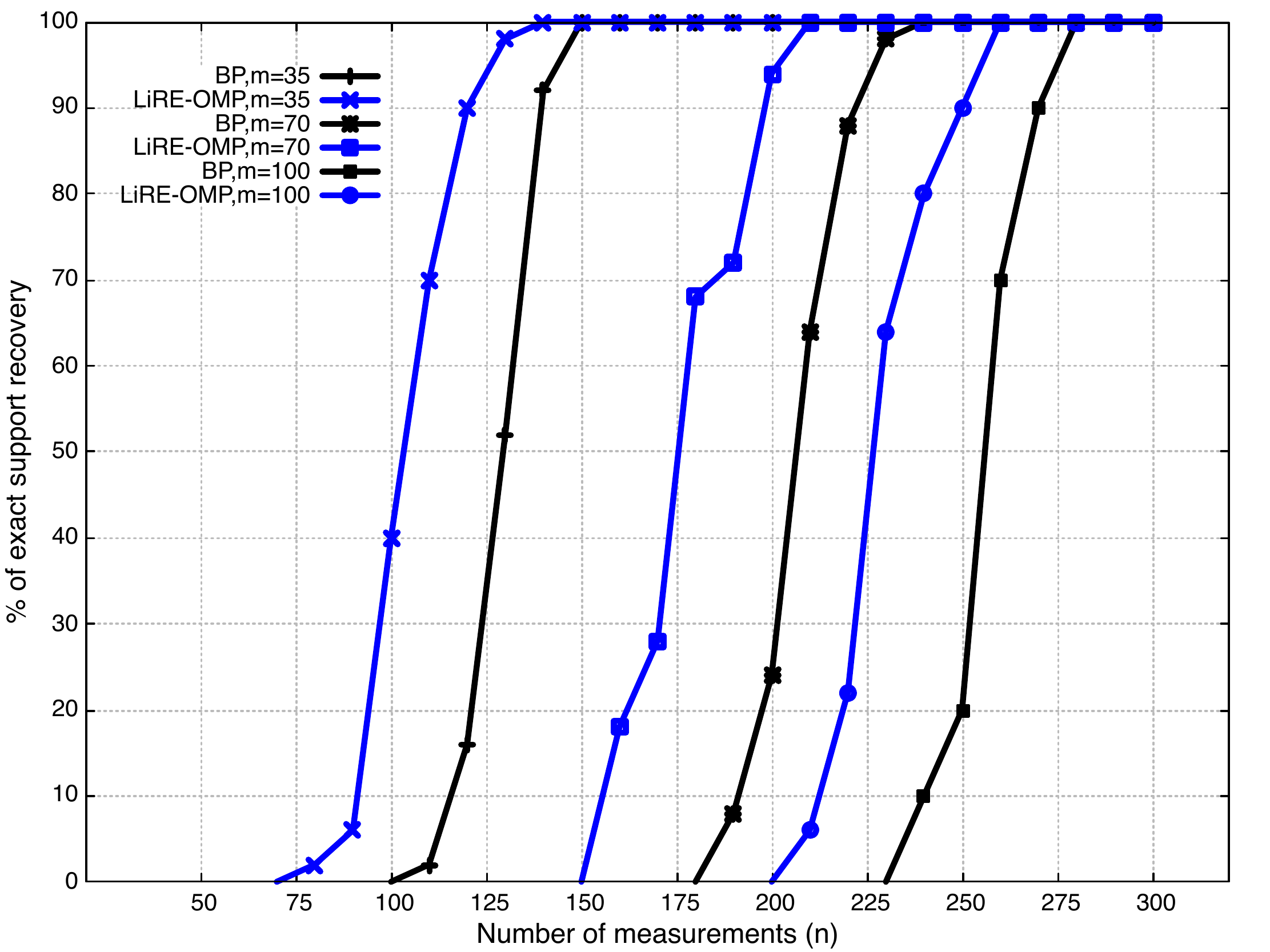}}
\caption{LiRE$\circ$OMP vs. BP, for $d=512$.}\label{fig22} 
\end{center}
\vskip -0.2in
\end{figure}

\subsubsection*{Discussion}
Referring to Fig.~\ref{fig2}, we observe that LiRE$\circ$OMP performs at least as well BP---in certain regimes strictly better---despite the significantly lower computational cost: $\cO(mnd)$ for LiRE$\circ$OMP vs. $\cO(n^2d^{1.5})$ for BP \cite{new_fornasier2010numerical}. We also note that the region of improvement (non-white region) does not substantially change with the number of iterations. Finally, the numerical data for Fig.~\ref{fig22} shows that LiRE$\circ$OMP reduces by $5-10\%$ the number of measurements needed by BP to achieve perfect support recovery. 
\subsection{Robustness to noise: LiRE$\circ$OMP vs. LASSO}
In the third set of experiments, see Fig.\ref{fig3}, we considered the noisy measurement model $$\by=\bphi \bx^* +\bz,$$ and compared LiRE$\circ$OMP (with LiRE run only once) against LASSO, similarly as for the previous set of experiments (difference of percentage of exact recovery computed over $50$ experiments for any given pair of normalized sparsity level and number of samples). The noise vector $\bz$ was  i.i.d. Gaussian across components, with zero mean and variance $\sigma^2\in \{0.0005, 0.001, 0.002\}$. The variance values were chosen empirically; high enough to put LASSO outside its comfort zone of perfect reconstruction, but not too high to allow LiRE to correct errors. The $\ell_1$-regularization parameter of LASSO was chosen by performing a ten-fold cross-validation.

\begin{figure}[h!]
\begin{center}
\centerline{\includegraphics[scale=.7]{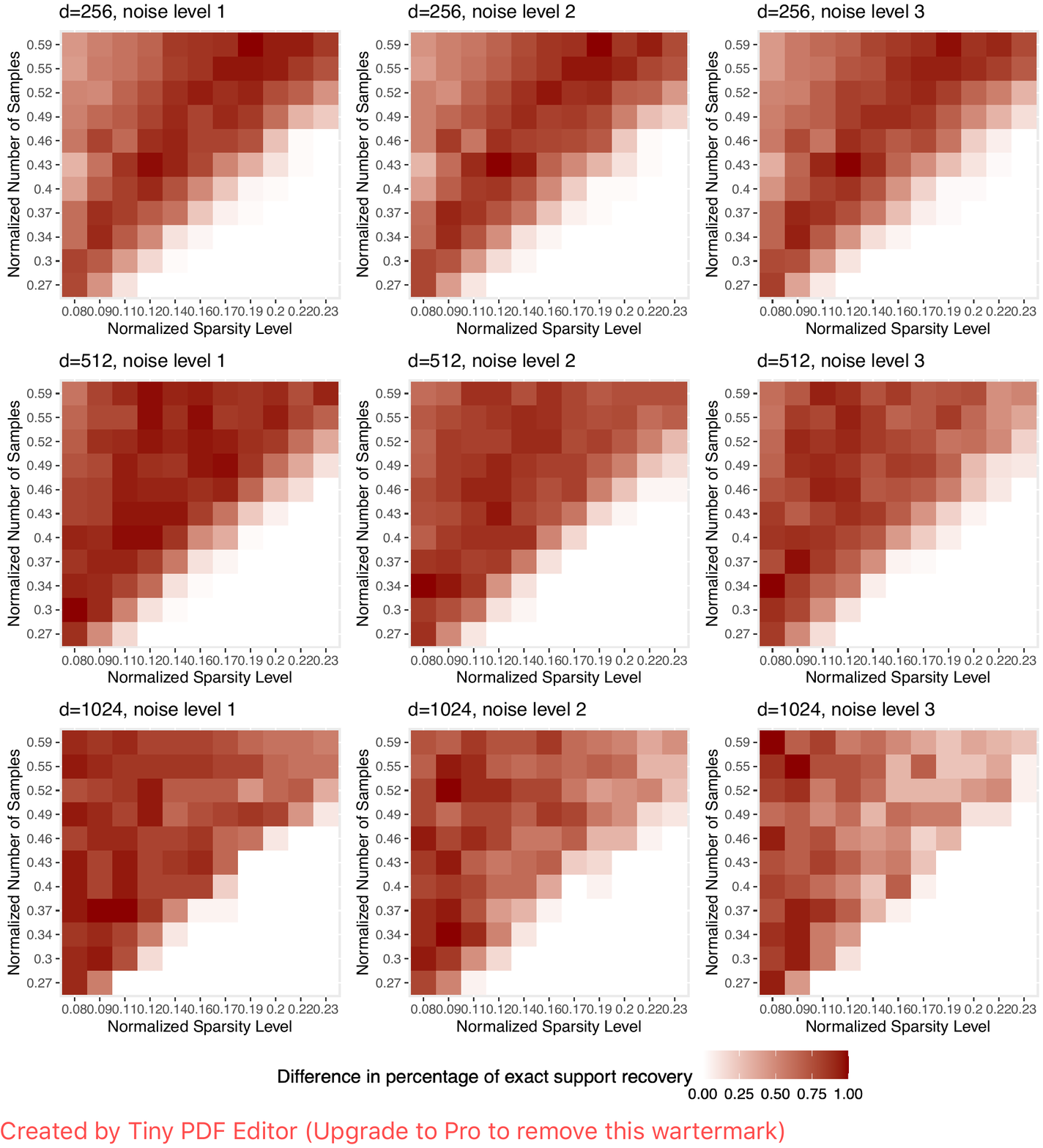}}
\caption{Same experiment as illustrated in Fig.~\ref{fig1}, but for LiRE$\circ$OMP vs. LASSO at different noise levels.}\label{fig3} 
\end{center}
\end{figure}

\subsection*{Discussion}
In Fig.~\ref{fig3}, we observe that  LiRE$\circ$OMP always improves support recovery over LASSO (except for the white region where the support is never fully recovered, with and without LiRE) and is also significantly faster since LASSO solver's complexity is quadratic or cubic in~$d$, depending on the sparsity level \cite{efron2004least}. Outside the white region the improvement is significant, mostly by more than $50\%$. 


\subsection{LiRE as a standalone support recovery algorithm?}\label{stalone}
In the fourth set of experiments (Fig.~\ref{fig:fig}), we compared LiRE (one pass) as a standalone support recovery algorithm against BP and OMP, and run similar experiments as described in Section~\ref{bpLi}. LiRE was initialized with a randomly and uniformly selected $\bs_{\text{in}}$.
Fig.~\ref{fig:sfig1} gives the percentage of exact support recovery with BP minus the percentage of exact support recovery with LiRE. Red favors BP and blue favors LiRE, and similarly for LiRE against OMP in Fig.~\ref{fig:sfig2} where red favors LiRE. Recall that the computational complexities of LiRE and OMP are similar, of order ${\cal{O}}(mnd)$, whereas it is order $\cO(n^2d^{1.5})$ for BP. 

\subsection*{Discussion}
Fig.~\ref{fig:sfig1} shows that BP is superior to LiRE at higher sparsity levels. At lower sparsity levels performances appear to be close. Fig.~\ref{fig:sfig2} shows that LiRE achieves better performance than OMP, and that the difference gets more pronounced as the sparsity level increases.

Note here that with a random initialization $\bs_{\text{in}}$ may contain no correct feature, that is $\bar{e}=m$. Applying Corollary~\ref{corollary2} with the trivial upper bound $\bar{e}=m$ yields the sufficient condition
\begin{align}\label{emlire}\delta_{2m+1}\leq \frac{1}{\sqrt{2 m}}(1+o(1)) \quad m\to \infty\end{align}
for LiRE to recover an arbitrary number of errors. This condition is significantly more stringent than the sufficient condition for OMP to recover the underlying support: $\delta_{m+1}< \frac{1}{\sqrt{m+1}}$. In light of Fig.~\ref{fig:sfig2}, LiRE's sufficient condition \eqref{emlire} to recover an arbitrary number of errors appears to be loose.

\begin{figure}
\begin{subfigure}{0.5\textwidth}
  \centering
  \includegraphics[scale=0.65]{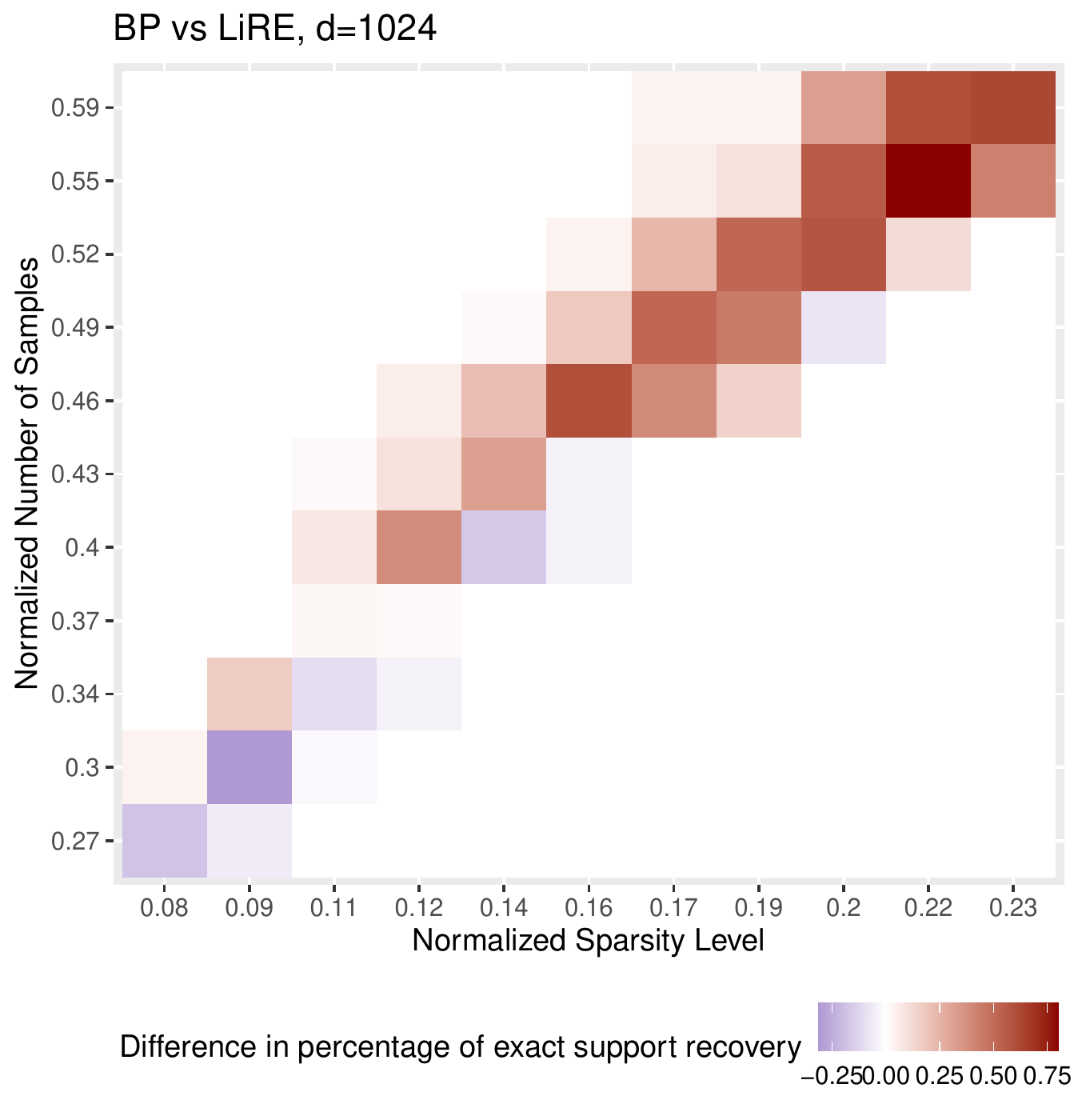}
  \caption{BP vs. LiRE.}
  \label{fig:sfig1}
\end{subfigure}%
\begin{subfigure}{0.5\textwidth}
  \centering
  \includegraphics[scale=0.65]{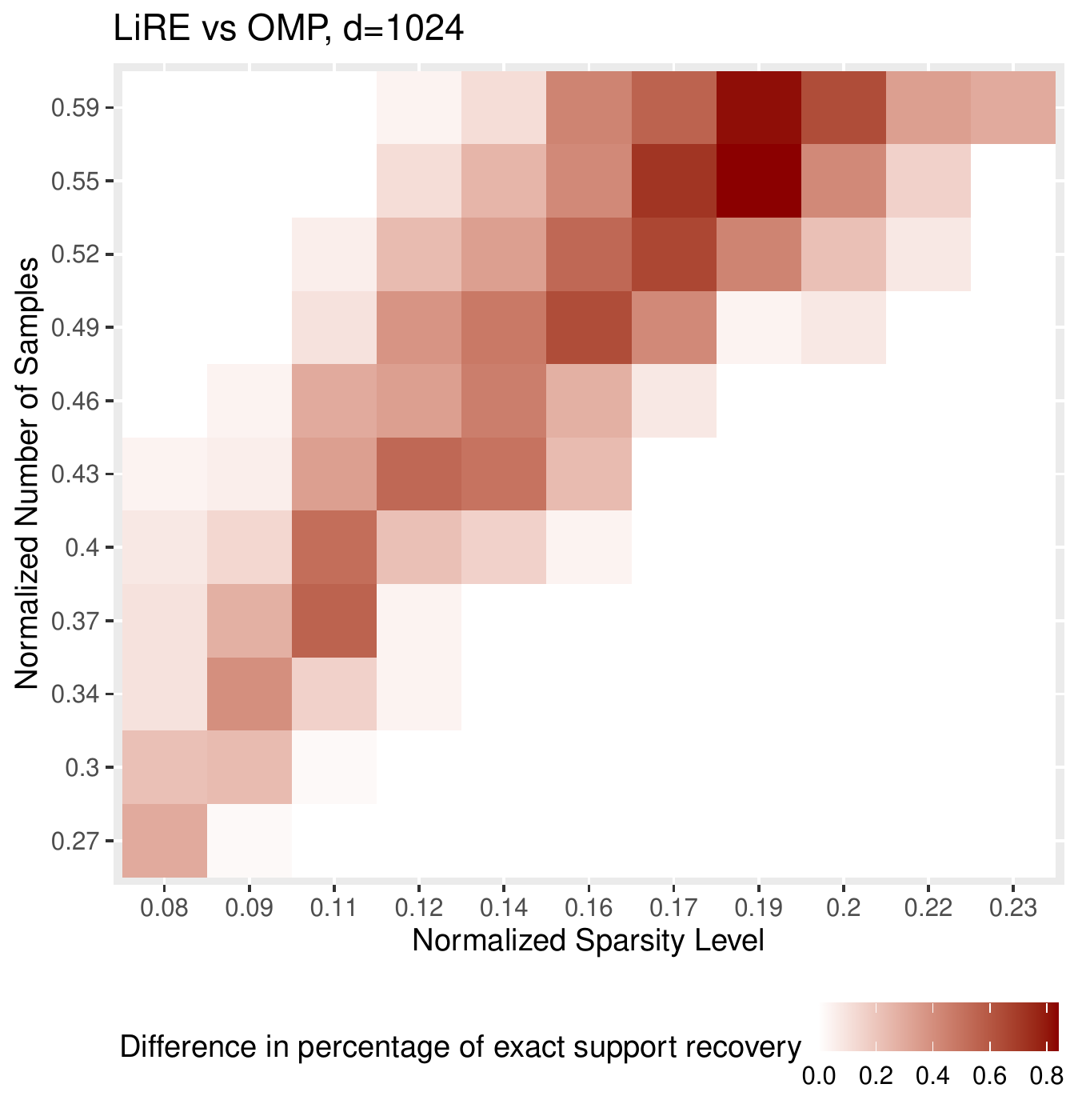}
  \caption{LiRE vs. OMP.}
  \label{fig:sfig2}
\end{subfigure}
\caption{LiRE as a standalone support recovery algorithm. Red squares favor BP in Fig.~\ref{fig:sfig1} and favor LiRE in Fig.~\ref{fig:sfig2}.}
\label{fig:fig}
\end{figure}

\section{Analysis} \label{canalysis}

\begin{lemma}[Lemmas $1$ and $2$ in \cite{1}]\label{lemma1}
Let $\bl,\bs$ be disjoint support vectors and suppose $\delta_{|\bl|+|\bs|}<1$.
Then,
for all $\bf{a} \in \mathbb{R}^{|\bl|}$ and $\bf{b}\in \mathbb{R}^{|\bs|}$ we have
\begin{align*}
   & |\bf{a}^\sT \bphi_{\bl}^\sT \bphi_{\bs}\bf{b}| \leq \delta_{|\bl|+|\bs|}||\bf{a}||_2||\bf{b}||_2\,,\\
   &||\bphi_{\bl}^\sT \bphi_{\bs}\bf{b}||_2 \leq \delta_{|\bl|+|\bs|}||\bf{b}||_2\,.
\end{align*}
\end{lemma}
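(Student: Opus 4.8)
The plan is the standard polarization argument that turns the RIP bound on $\|\bphi\bx\|_2$ for sparse $\bx$ into a bound on cross-terms. Since $\bl$ and $\bs$ are disjoint, any vector supported on $\bl\cup\bs$ is $(|\bl|+|\bs|)$-sparse, so the RIP constant $\delta\defeq\delta_{|\bl|+|\bs|}$ controls $\|\bphi\bx\|_2^2$ for all such $\bx$. First I would reduce to the normalized case $\|\ba\|_2=\|\bb\|_2=1$ (both sides of the first inequality scale homogeneously, and the second inequality will follow from the first). Given such $\ba,\bb$, form the two $(|\bl|+|\bs|)$-sparse vectors $\bu$ and $\bv$ in $\mathbb{R}^d$ that agree with $\ba$ on the coordinates indexed by $\bl$ and equal $+\bb$ (resp.\ $-\bb$) on the coordinates indexed by $\bs$, and vanish elsewhere. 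Disjointness of $\bl,\bs$ guarantees this is well-defined and that $\|\bu\|_2^2=\|\bv\|_2^2=\|\ba\|_2^2+\|\bb\|_2^2=2$.

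Next I would expand $\|\bphi\bu\|_2^2$ and $\|\bphi\bv\|_2^2$. Because $\bphi\bu=\bphi_\bl\ba+\bphi_\bs\bb$ and $\bphi\bv=\bphi_\bl\ba-\bphi_\bs\bb$, we get
\begin{align*}
\|\bphi\bu\|_2^2-\|\bphi\bv\|_2^2 = 4\,\ba^\sT\bphi_\bl^\sT\bphi_\bs\bb.
\end{align*}
Applying the RIP to $\bu$ (upper side) and to $\bv$ (lower side) and using $\|\bu\|_2^2=\|\bv\|_2^2=2$ yields $\|\bphi\bu\|_2^2\le 2(1+\delta)$ and $\|\bphi\bv\|_2^2\ge 2(1-\delta)$, hence $4\,\ba^\sT\bphi_\bl^\sT\bphi_\bs\bb\le 4\delta$, i.e.\ $\ba^\sT\bphi_\bl^\sT\bphi_\bs\bb\le\delta$. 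Repeating the argument with $-\ba$ in place of $\ba$ (or swapping the roles of $\bu$ and $\bv$) gives the matching lower bound, so $|\ba^\sT\bphi_\bl^\sT\bphi_\bs\bb|\le\delta$ in the normalized case, and then $|\ba^\sT\bphi_\bl^\sT\bphi_\bs\bb|\le\delta\|\ba\|_2\|\bb\|_2$ in general by homogeneity. This proves the first inequality.

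For the second inequality I would simply specialize the first. If $\bphi_\bl^\sT\bphi_\bs\bb=\mathbf{0}$ there is nothing to prove; otherwise take $\ba\defeq\bphi_\bl^\sT\bphi_\bs\bb/\|\bphi_\bl^\sT\bphi_\bs\bb\|_2$, a unit vector in $\mathbb{R}^{|\bl|}$, so that $\ba^\sT\bphi_\bl^\sT\bphi_\bs\bb=\|\bphi_\bl^\sT\bphi_\bs\bb\|_2$. The first inequality then gives $\|\bphi_\bl^\sT\bphi_\bs\bb\|_2\le\delta\|\ba\|_2\|\bb\|_2=\delta\|\bb\|_2$, as claimed.

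I do not expect any real obstacle here: the only point that needs care is the disjointness hypothesis, which is exactly what makes $\bu$ and $\bv$ genuinely $(|\bl|+|\bs|)$-sparse so that $\delta_{|\bl|+|\bs|}$ (rather than some larger-order constant) applies; the condition $\delta_{|\bl|+|\bs|}<1$ is used only to keep the bound meaningful. The rest is the routine parallelogram/polarization computation above.
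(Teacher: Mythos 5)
Your proof is correct: the polarization argument (applying the RIP upper bound to $\bphi_{\bl}\ba+\bphi_{\bs}\bb$ and the lower bound to $\bphi_{\bl}\ba-\bphi_{\bs}\bb$, then deducing the operator-norm form by choosing $\ba$ in the direction of $\bphi_{\bl}^\sT\bphi_{\bs}\bb$) is exactly the standard proof of this fact. The paper itself gives no proof and simply cites the result from its reference \cite{1}, where the argument is the same as yours; your only inessential remark is that $\delta_{|\bl|+|\bs|}<1$ "keeps the bound meaningful" --- in fact the hypothesis is not needed for this computation at all.
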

\begin{lemma}\label{lemma2}
Let $\bs,\bq$ be support vectors that are disjoint from support vector $\bl$. If  $\max\{\delta_{|\bs|+|\bl|},\delta_{|\bq|+|\bl|}\}<1$, then for any ${\bf{a}} \in \mathbb{R}^{|\bs|}$ we have
\begin{align*}
 &||P\{\bphi_{\bl}\}\bphi_{\bs}{\bf{a}}||_2 \leq \frac{\delta_{|\bs|+|\bl|}}{\sqrt{1-\delta_{|\bl|}}}||{\bf{a}}||_2\,,\\
 &||\bphi_\bq^\sT  P\{\bphi_{\bl}\}\bphi_{\bs}{\bf{a}}||_2 \leq \frac{\delta_{|\bs|+|\bl|}\delta_{|\bq|+|\bl|}}{{1-\delta_{|\bl|}}}||\bf{a}||_2\,.
 \end{align*}
\end{lemma}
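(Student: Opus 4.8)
The plan is to reduce both inequalities to the restricted isometry property together with Lemma~\ref{lemma1}, exploiting that $P\{\bphi_{\bl}\}$ is an orthogonal projection (idempotent and self-adjoint). First I would note at the outset that $\delta_{|\bl|}\leq\max\{\delta_{|\bs|+|\bl|},\delta_{|\bq|+|\bl|}\}<1$, so that $\bphi_{\bl}^\sT\bphi_{\bl}$ is invertible and all the inverses appearing below are well defined.

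For the first inequality, I would set $\bu \defeq \bphi_{\bl}^\sT \bphi_{\bs}\ba$, so that
$$||P\{\bphi_{\bl}\}\bphi_{\bs}\ba||_2^2 = \bu^\sT (\bphi_{\bl}^\sT\bphi_{\bl})^{-1}\bu.$$
The RIP bound $\bphi_{\bl}^\sT\bphi_{\bl}\succeq (1-\delta_{|\bl|})I$ gives $(\bphi_{\bl}^\sT\bphi_{\bl})^{-1}\preceq (1-\delta_{|\bl|})^{-1}I$, hence the right-hand side is at most $||\bu||_2^2/(1-\delta_{|\bl|})$. Finally, the second bound of Lemma~\ref{lemma1} applied to the disjoint pair $\bl,\bs$ yields $||\bu||_2\leq \delta_{|\bl|+|\bs|}||\ba||_2$; combining these and taking square roots gives the claim.

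For the second inequality, I would bound the $\ell_2$ norm by duality: for any $\bb\in\mathbb{R}^{|\bq|}$ with $||\bb||_2\leq1$,
$$\bb^\sT\bphi_\bq^\sT P\{\bphi_{\bl}\}\bphi_{\bs}\ba = \langle P\{\bphi_{\bl}\}\bphi_\bq\bb,\, P\{\bphi_{\bl}\}\bphi_{\bs}\ba\rangle \leq ||P\{\bphi_{\bl}\}\bphi_\bq\bb||_2\,||P\{\bphi_{\bl}\}\bphi_{\bs}\ba||_2,$$
using $P\{\bphi_{\bl}\}^\sT P\{\bphi_{\bl}\} = P\{\bphi_{\bl}\}$ and the Cauchy--Schwarz inequality. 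Applying the first inequality of the lemma to each factor (once with the disjoint pair $\bl,\bq$ and once with the disjoint pair $\bl,\bs$) bounds this by $\frac{\delta_{|\bq|+|\bl|}\delta_{|\bs|+|\bl|}}{1-\delta_{|\bl|}}||\ba||_2$. Taking the supremum over such $\bb$ produces $||\bphi_\bq^\sT P\{\bphi_{\bl}\}\bphi_{\bs}\ba||_2$ on the left, completing the proof.

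There is no genuine obstacle here; the only point requiring a little care is to invoke the first inequality with the roles of the disjoint support vectors chosen appropriately — $\bq$ and $\bs$ need not be disjoint from each other, but both are disjoint from $\bl$, which is precisely the hypothesis used in each application.
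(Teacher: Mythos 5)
Your proof is correct and follows essentially the same route as the paper: the first bound via $||P\{\bphi_{\bl}\}\bphi_{\bs}\ba||_2^2=\bu^\sT(\bphi_{\bl}^\sT\bphi_{\bl})^{-1}\bu$ with $\bu=\bphi_{\bl}^\sT\bphi_{\bs}\ba$, RIP on the inverse Gram matrix, and Lemma~\ref{lemma1}; the second via duality over $||\bb||_2\leq 1$, idempotence of the projection, Cauchy--Schwarz, and two applications of the first bound. Your explicit remarks on invertibility and on which pair of disjoint supports is used in each application are minor clarifications of what the paper leaves implicit.
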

\begin{proof}
We know that $$||P\{\bphi_{\bl}\}\bphi_{\bs}{\bf{a}}||_2^2=\bu^\sT (\bphi^\sT _{\bl}\bphi_{\bl})^{-1}\bu\,,$$ where $\bu\defeq\bphi_{\bl}^\sT \bphi_{\bs}{\bf{a}}$. From the definition of the restricted isometry property we get $$\bu^\sT (\bphi^\sT _{\bl}\bphi_{\bl})^{-1}\bu \leq \frac{||\bu||_2^2}{1-\delta_{|\bl|}}\,. $$ Next, Lemma~\ref{lemma1} gives us $$||\bu||_2 \leq \delta_{|\bl|+|\bs|}||\ba||_2\,.$$ This implies $$||P\{\bphi_{\bl}\}\bphi_{\bs}a||_2 \leq \frac{\delta_{|\bl|+|\bs|}}{\sqrt{1-\delta_{|\bl|}}}||a||_2. $$ For the proof of the second inequality in Lemma~\ref{lemma2} first assume $||\bb||_2\leq 1$. Then, the Cauchy inequality yields $$\bb^\sT \bphi_\bq^\sT  P\{\bphi_{\bl}\}\bphi_{\bs}\ba \leq ||P\{\bphi_{\bl}\}\bphi_\bq \bb||_2 || P\{\bphi_{\bl}\}\bphi_{\bs}\ba   ||_2.$$Next, by invoking the first inequality in  Lemma \ref{lemma2} we obtain $$\max\limits_{||\bb||_2 \leq 1}^{} \bb^\sT \bphi_\bq^\sT  P\{\bphi_{\bl}\}\bphi_{\bs}\ba \leq  \frac{\delta_{|\bs|+|\bl|}\delta_{|\bq|+|\bl|}}{{1-\delta_{|\bl|}}}||\ba||_2\,,$$
which establishes the desired result.
\end{proof}

\begin{lemma}\label{lemma4}
Let $\bs_1,\bs_2$ be two disjoint support vectors with $|\bs_1|=k_1$ and $|\bs_2|=k_2$. If $\delta_{k_1+k_2} \leq 0.5$, then $\bphi_{\bs_1}^\sT (I-P\{\bphi_{\bs_2}\})\bphi_{\bs_1}$ is invertible and 
\begin{align*}
    (\bphi_{\bs_1}^\sT (I-P\{\bphi_{\bs_2}\})\bphi_{\bs_1})^{-1} \succeq \frac{1}{1+\delta_{k_1}}I_{k_1}\,.
\end{align*}
\end{lemma}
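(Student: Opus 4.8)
The plan is to sandwich the matrix $M\defeq \bphi_{\bs_1}^\sT (I-P\{\bphi_{\bs_2}\})\bphi_{\bs_1}$ between two positive multiples of $I_{k_1}$ in the Loewner order and then invert. The starting point is the quadratic form: since $I-P\{\bphi_{\bs_2}\}$ is the orthogonal projector onto the orthogonal complement of the column space of $\bphi_{\bs_2}$, it is symmetric and idempotent, so for every $\bz\in\mathbb{R}^{k_1}$
\[
\bz^\sT M\bz=\big\|(I-P\{\bphi_{\bs_2}\})\bphi_{\bs_1}\bz\big\|_2^2=\|\bphi_{\bs_1}\bz\|_2^2-\|P\{\bphi_{\bs_2}\}\bphi_{\bs_1}\bz\|_2^2,
\]
the last step being the Pythagorean identity for the orthogonal decomposition $\bphi_{\bs_1}\bz=P\{\bphi_{\bs_2}\}\bphi_{\bs_1}\bz+(I-P\{\bphi_{\bs_2}\})\bphi_{\bs_1}\bz$.

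For the upper bound I would drop the nonnegative projection term and use the RIP inequality $\|\bphi_{\bs_1}\bz\|_2^2\le(1+\delta_{k_1})\|\bz\|_2^2$, which gives $M\preceq(1+\delta_{k_1})I_{k_1}$. For a matching lower bound I would estimate the projection term by the first inequality of Lemma~\ref{lemma2} (applied to the disjoint pair $\bs_1,\bs_2$), namely $\|P\{\bphi_{\bs_2}\}\bphi_{\bs_1}\bz\|_2^2\le\frac{\delta_{k_1+k_2}^2}{1-\delta_{k_2}}\|\bz\|_2^2$, and combine it with $\|\bphi_{\bs_1}\bz\|_2^2\ge(1-\delta_{k_1})\|\bz\|_2^2$ to get $\bz^\sT M\bz\ge\big(1-\delta_{k_1}-\tfrac{\delta_{k_1+k_2}^2}{1-\delta_{k_2}}\big)\|\bz\|_2^2$. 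By monotonicity of the RIP constants, $\delta_{k_1},\delta_{k_2}\le\delta_{k_1+k_2}\le 0.5$, so this coefficient is at least $1-\delta_{k_1+k_2}-\frac{\delta_{k_1+k_2}^2}{1-\delta_{k_1+k_2}}=\frac{1-2\delta_{k_1+k_2}}{1-\delta_{k_1+k_2}}\ge 0$.

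To obtain genuine invertibility (the previous bound only forces $M\succeq 0$ when $\delta_{k_1+k_2}=0.5$) I would instead argue that $B\defeq(I-P\{\bphi_{\bs_2}\})\bphi_{\bs_1}$ has trivial kernel: if $\bphi_{\bs_1}\bz$ lay in the column space of $\bphi_{\bs_2}$, then $\bphi_{\bs_1}\bz=\bphi_{\bs_2}\bw$ for some $\bw$, and the vector supported on the disjoint set $\bs_1\cup\bs_2$ with blocks $\bz$ and $-\bw$ would be a $(k_1+k_2)$-sparse vector in the kernel of $\bphi$, forcing $\bz=0$ since $\delta_{k_1+k_2}<1$. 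Hence $M=B^\sT B\succ 0$ is invertible, and since $0\prec M\preceq(1+\delta_{k_1})I_{k_1}$, inverting (which reverses the Loewner order on positive definite matrices) yields $M^{-1}\succeq\frac{1}{1+\delta_{k_1}}I_{k_1}$, as claimed. I do not expect a real obstacle here; the only delicate point is the boundary case $\delta_{k_1+k_2}=0.5$ for invertibility, which is why I would route that step through the kernel argument rather than through the quadratic-form lower bound alone.
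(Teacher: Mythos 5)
Your proof is correct and follows essentially the same route as the paper's: the identical quadratic-form decomposition $\bz^\sT M\bz=\|\bphi_{\bs_1}\bz\|_2^2-\|P\{\bphi_{\bs_2}\}\bphi_{\bs_1}\bz\|_2^2$, the same upper bound $M\preceq(1+\delta_{k_1})I_{k_1}$ from RIP, and the same Lemma~\ref{lemma2}-based lower bound on the projection term. Your additional kernel argument for invertibility is in fact more careful than the paper's, whose quadratic-form lower bound has coefficient $1-\delta_{k_1}-\delta_{k_1+k_2}^2/(1-\delta_{k_2})$, which can degenerate to zero at the admissible boundary $\delta_{k_1+k_2}=0.5$ and hence only yields $M\succeq 0$ there.
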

\begin{proof}
Given $\bz \in \mathbb{R}^{k_1}$ we have $$\bz^\sT (\bphi_{\bs_1}^\sT (I-P\{\bphi_{\bs_2}\})\bphi_{\bs_1})z=||\bphi_{\bs_1}\bz||_2^2 - ||P\{\bphi_{\bs_2}\}\bphi_{\bs_1}\bz||_2^2\,,$$ which is no larger than $ ||\bphi_{\bs_1}\bz||_2^2$. Next, recall that the RIP property implies 
\begin{equation}\label{eq:dummy2}
||\bphi_{\bs_1}z||_2^2\leq (1+\delta_{k_1})||\bz||_2^2.
\end{equation}

Therefore,   
$$(\bphi_{\bs_1}^\sT (I-P\{\bphi_{\bs_2}\})\bphi_{\bs_1}) \preceq (1+\delta_{k_1})I_{k_1}.$$
Moreover, Lemma \ref{lemma2} gives us  
\begin{equation}\label{eq:dummy3}
||P\{\bphi_{\bs_2}\}\bphi_{\bs_1}\bz||_2^2\leq \frac{\delta_k^2||\bz||_2^2}{1-\delta_{k_2}}.
\end{equation}
By combining~\eqref{eq:dummy2} and~\eqref{eq:dummy3} we get 
\begin{align*}(1-\delta_{k_1}-\frac{\delta_{k}^2}{1-\delta_{k_2}})I_{k_1} &\preceq(\bphi_{\bs_1}^\sT (I-P\{\bphi_{\bs_2}\})\bphi_{\bs_1}) \nonumber\\
&\preceq (1+\delta_{k_1})I_{k_1}\,.  \end{align*} Therefore $(\bphi_{\bs_1}^\sT (I-P\{\bphi_{\bs_2}\})\bphi_{\bs_1})$ is invertible and we have the desired property.
 \end{proof}
\begin{lemma}\label{lemma3}
Let $\bs_1, \bs_2$ be two disjoint support vectors, let $\bs=\bs_1\cup\bs_2$, and suppose $\delta_{|\bs_1|+|\bs_2|}\leq 0.5$. Then,
 \begin{align*}
 \bx_{\bs_1}=(\bphi_{\bs_1}^\sT (I-P\{\bphi_{\bs_2}\})\bphi_{\bs_1})^{-1}\bphi_{\bs_1}^\sT \by^{\perp\bs_2}.  
\end{align*}
\end{lemma}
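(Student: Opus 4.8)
The plan is to write the least-square estimate $\bx\defeq\cE(\bs)$ explicitly through the normal equations, split it into the sub-blocks indexed by $\bs_1$ and $\bs_2$, and then read off the $\bs_1$-block using the Schur-complement form of the block matrix inverse.

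First I would record the relevant invertibility facts. Since $\delta_{|\bs_1|+|\bs_2|}\le 0.5<1$, the matrix $\bphi_{\bs}$ has full column rank, so $\bphi_{\bs}^\sT\bphi_{\bs}$ is invertible and $\bx_{\bs}=(\bphi_{\bs}^\sT\bphi_{\bs})^{-1}\bphi_{\bs}^\sT\by$; likewise $\delta_{|\bs_2|}\le\delta_{|\bs_1|+|\bs_2|}<1$ makes $\bphi_{\bs_2}^\sT\bphi_{\bs_2}$ invertible, so that $P\{\bphi_{\bs_2}\}$ and hence $\by^{\perp\bs_2}$ are well defined. Because permuting the columns of $\bphi_{\bs}$ only permutes the entries of $\bx_{\bs}$ accordingly, I may assume without loss of generality that $\bphi_{\bs}=[\,\bphi_{\bs_1}\ \ \bphi_{\bs_2}\,]$, which puts the normal equations in the block form
\begin{align*}
\begin{bmatrix}\bx_{\bs_1}\\ \bx_{\bs_2}\end{bmatrix}
=\begin{bmatrix}\bphi_{\bs_1}^\sT\bphi_{\bs_1} & \bphi_{\bs_1}^\sT\bphi_{\bs_2}\\ \bphi_{\bs_2}^\sT\bphi_{\bs_1} & \bphi_{\bs_2}^\sT\bphi_{\bs_2}\end{bmatrix}^{-1}
\begin{bmatrix}\bphi_{\bs_1}^\sT\by\\ \bphi_{\bs_2}^\sT\by\end{bmatrix}.
\end{align*}

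Next I would introduce the Schur complement $\bpsi\defeq\bphi_{\bs_1}^\sT\bphi_{\bs_1}-\bphi_{\bs_1}^\sT\bphi_{\bs_2}(\bphi_{\bs_2}^\sT\bphi_{\bs_2})^{-1}\bphi_{\bs_2}^\sT\bphi_{\bs_1}$ of the lower-right block, and observe that it equals $\bpsi=\bphi_{\bs_1}^\sT(I-P\{\bphi_{\bs_2}\})\bphi_{\bs_1}$, which is invertible by Lemma~\ref{lemma4} since $\delta_{|\bs_1|+|\bs_2|}\le 0.5$. The standard formula for the top block of the inverse of a block matrix with invertible Schur complement then gives $\bx_{\bs_1}=\bpsi^{-1}\bphi_{\bs_1}^\sT\by-\bpsi^{-1}\bphi_{\bs_1}^\sT\bphi_{\bs_2}(\bphi_{\bs_2}^\sT\bphi_{\bs_2})^{-1}\bphi_{\bs_2}^\sT\by=\bpsi^{-1}\bphi_{\bs_1}^\sT(I-P\{\bphi_{\bs_2}\})\by=\bpsi^{-1}\bphi_{\bs_1}^\sT\by^{\perp\bs_2}$, which is exactly the claimed identity.

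There is no genuine difficulty here; the only points that require a line of care are (i) justifying the block decomposition despite the ascending-order convention on the columns of $\bphi_{\bs}$, which is handled by the permutation remark above; (ii) checking the two invertibility statements ($\bphi_{\bs_2}^\sT\bphi_{\bs_2}$ from the RIP bound and $\bpsi$ from Lemma~\ref{lemma4}) so that the Schur-complement identity is applicable; and (iii) quoting the block-inverse formula correctly. The mild subtlety I would watch for is simply not conflating $\bx_{\bs}$ (the coefficient vector restricted to the columns in $\bs$) with the full $d$-dimensional $\cE(\bs)$.
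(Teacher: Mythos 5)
Your proposal is correct and follows essentially the same route as the paper's own proof: write the normal equations in block form, identify the Schur complement of the $\bphi_{\bs_2}^\sT\bphi_{\bs_2}$ block as $\bphi_{\bs_1}^\sT(I-P\{\bphi_{\bs_2}\})\bphi_{\bs_1}$, invoke Lemma~\ref{lemma4} for its invertibility, and read off the top block of the inverse. Your added remark about permuting columns to justify the block decomposition is a small point of care the paper leaves implicit, but it does not change the argument.
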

\begin{proof}
Since $\delta_{|\bs_1|+|\bs_2|} <1 $ we have that $\bphi_{\bs}$ is invertible and  $ {\bx}_{\bs}=(\bphi_{\bs}^\sT \bphi_{\bs})^{-1}\bphi_{\bs}^\sT \by$. Also, since $\bs=\bs_1 \cup \bs_2$ we have 
\begin{align*}
\begin{bmatrix}
 {\bx}_{\bs_1}\\
 {\bx}_{\bs_2}
\end{bmatrix}
= \begin{bmatrix}
\bphi_{\bs_1}^\sT \bphi_{\bs_1} & \bphi_{\bs_1}^\sT \bphi_{\bs_2}\\
\bphi_{\bs_2}^\sT \bphi_{\bs_1} & \bphi_{\bs_2}^\sT \bphi_{\bs_2}\\
\end{bmatrix}^{-1}
\begin{bmatrix}
\bphi_{\bs_1}^\sT \by\\
\bphi_{\bs_2}^\sT \by\\
\end{bmatrix}\,.
\end{align*}
Let $\bpsi$ denote the Schur complement of block $\bphi_{\bs_2}^\sT \bphi_{\bs_2}$ of the matrix $\bphi_{\bs}^\sT \bphi_{\bs}$. It then follows that  $$\bpsi=\bphi_{\bs_1}^\sT \bphi_{\bs_1}-\bphi_{\bs_1}^\sT \bphi_{\bs_2}
(\bphi_{\bs_2}^\sT \bphi_{\bs_2})^{-1}\bphi_{\bs_2}^\sT \bphi_{\bs_1},$${\it{i.e.}}, $\bpsi=\bphi_{\bs_1}^\sT (I-P\{\bphi_{\bs_2}\})\bphi_{\bs_1}$. Since $\delta_{|\bs_1|+|\bs_2|} \leq 0.5 $,  we have that $\bpsi$ is invertible by Lemma~\ref{lemma4}. On the other hand, from the properties of invertible Schur complement we know that $$ {\bx}_{\bs_1}=\bpsi^{-1}\bphi_{\bs_1}^\sT \by-\bpsi^{-1}\bphi_{\bs_1}^\sT \bphi_{\bs_2}(\bphi_{\bs_2}^\sT \bphi_{\bs_2})^{-1}\bphi_{\bs_2}^\sT \by,$$ which can be rewritten as ${\bx}_{\bs_1}=\bpsi^{-1}\bphi_{\bs_1}^\sT \lsr{\by}{\bs_2}$, thereby completing the proof.  
\end{proof}
\begin{lemma}\label{lemma5}
Fix a length $k$ support vector $\bl$, let $\ell\defeq \max\{m+1,k\}$, and let $\bx$ be the regression of $\by$ on $\bl$, {\it{i.e.}},  $\bx=\cE(\bl)$. 
If $\delta_{\ell}\leq 0.5$, then for any $1\leq j \leq k$ we have 
\begin{equation*}
|\bx_{\bl[j]}| \leq \frac{\delta_\ell}{1-\delta_{\ell}-\delta_{\ell}^2}||\bxs_{\bss/\bl}||_2
\end{equation*}
whenever $\bl[j] \notin \bss$.
\end{lemma}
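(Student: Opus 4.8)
The plan is to mirror the structure used for the other regression-coefficient bounds in this section: express $\bx_{\bl[j]}$ as an explicit scalar ratio via Lemma~\ref{lemma3}, then bound its numerator from above and its denominator from below.

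First I would write $\bl=\bl[j]\cup\bl[-j]$ and apply Lemma~\ref{lemma3} with $\bs_1=\bl[j]$ (a single feature) and $\bs_2=\bl[-j]$. This is legitimate because $|\bl|=k$ and, by monotonicity of the RIP constants, $\delta_k\le\delta_\ell\le 0.5$. Since $\bs_1$ consists of a single index, $\bphi_{\bl[j]}^\sT(I-P\{\bphi_{\bl[-j]}\})\bphi_{\bl[j]}$ is the scalar $\|\lsr{\bphi_{\bl[j]}}{\bl[-j]}\|_2^2$, so Lemma~\ref{lemma3} gives
$$|\bx_{\bl[j]}|=\frac{|\bphi_{\bl[j]}^\sT\lsr{\by}{\bl[-j]}|}{\|\lsr{\bphi_{\bl[j]}}{\bl[-j]}\|_2^2}.$$

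To bound the numerator, I would use $\by=\bphi_{\bss}\bxs$ together with the fact that $(I-P\{\bphi_{\bl[-j]}\})$ annihilates every column of $\bphi_{\bss\cap\bl[-j]}$; this yields $\lsr{\by}{\bl[-j]}=(I-P\{\bphi_{\bl[-j]}\})\bphi_{\bss\backslash\bl[-j]}\bxs_{\bss\backslash\bl[-j]}$, and since $\bl[j]\notin\bss$ we have $\bss\backslash\bl[-j]=\bss\backslash\bl$, so the relevant norm is exactly $\|\bxs_{\bss/\bl}\|_2$. A triangle inequality then splits $|\bphi_{\bl[j]}^\sT\lsr{\by}{\bl[-j]}|$ into the term $|\bphi_{\bl[j]}^\sT\bphi_{\bss\backslash\bl[-j]}\bxs_{\bss\backslash\bl[-j]}|$, which Lemma~\ref{lemma1} bounds by $\delta_\ell\|\bxs_{\bss/\bl}\|_2$ (using $\bl[j]\notin\bss$ for disjointness and $1+|\bss|\le\ell$), and the projection cross-term $|\bphi_{\bl[j]}^\sT P\{\bphi_{\bl[-j]}\}\bphi_{\bss\backslash\bl[-j]}\bxs_{\bss\backslash\bl[-j]}|$, which the second inequality of Lemma~\ref{lemma2} (equivalently, Cauchy--Schwarz together with its first inequality) bounds by $\frac{\delta_\ell^2}{1-\delta_\ell}\|\bxs_{\bss/\bl}\|_2$. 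Summing and using $\delta_\ell+\frac{\delta_\ell^2}{1-\delta_\ell}=\frac{\delta_\ell}{1-\delta_\ell}$ gives $|\bphi_{\bl[j]}^\sT\lsr{\by}{\bl[-j]}|\le\frac{\delta_\ell}{1-\delta_\ell}\|\bxs_{\bss/\bl}\|_2$.

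For the denominator, unit-norm columns give $\|\lsr{\bphi_{\bl[j]}}{\bl[-j]}\|_2^2=1-\|P\{\bphi_{\bl[-j]}\}\bphi_{\bl[j]}\|_2^2$, and the first inequality of Lemma~\ref{lemma2} bounds $\|P\{\bphi_{\bl[-j]}\}\bphi_{\bl[j]}\|_2^2$ by $\frac{\delta_\ell^2}{1-\delta_\ell}$, so the denominator is at least $1-\frac{\delta_\ell^2}{1-\delta_\ell}=\frac{1-\delta_\ell-\delta_\ell^2}{1-\delta_\ell}$, which is strictly positive since $\delta_\ell\le 0.5$. Dividing the numerator bound by this denominator bound yields $|\bx_{\bl[j]}|\le\frac{\delta_\ell}{1-\delta_\ell-\delta_\ell^2}\|\bxs_{\bss/\bl}\|_2$, as claimed. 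The only delicate point I anticipate is the bookkeeping of RIP orders in the numerator: the raw estimates coming out of Lemmas~\ref{lemma1} and~\ref{lemma2} carry constants of several different orders, and one must absorb all of them into $\delta_\ell$ using monotonicity and the hypothesis $\delta_\ell\le 0.5$; the projection cross-term is where this needs the most care. The degenerate case $\bss\subseteq\bl$ (hence $\|\bxs_{\bss/\bl}\|_2=0$) is automatic, since then $\lsr{\by}{\bl[-j]}=0$ and the numerator vanishes.
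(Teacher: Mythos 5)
Your proposal is correct and follows essentially the same route as the paper's proof: Lemma~\ref{lemma3} applied to the split $\bl=\bl[j]\cup\bl[-j]$ to get the scalar ratio, a triangle-inequality split of the numerator handled by Lemmas~\ref{lemma1} and~\ref{lemma2}, and the same unit-column lower bound on the denominator. The one delicate point you flag — absorbing the RIP order of the projection cross-term (which is a priori $|\bss\backslash\bl[-j]|+k-1$) into $\delta_\ell$ — is handled no more carefully in the paper itself, so your write-up matches it step for step.
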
 
 \begin{proof}
 Write $\bl$ as $\bl[j]\cup\bl[-j]$. Since $\delta_{k}\leq 0.5$ we can apply Lemma \ref{lemma3} and get 
 \begin{equation}\label{eq:dummy6}
 |\bx_{\bl[j]}|=\frac{|\bphi_{\bl[j]}^\sT \by^{\perp\bl[-j] }|}{||\bphi_{\bl[j]}^{\perp \bl[-j]}||_2^2}.
 \end{equation}
  We proceed by upper-bounding the numerator and lower-bounding the denominator of~\eqref{eq:dummy6}. For the numerator we have
 \begin{align*}
     |\bphi_{\bl[j]}^\sT \by^{\perp\bl[-j] }|&=|\bphi_{\bl[j]}^\sT (I-P\{\bphi_{\bl[-j]}\})\bphi_{\bss}\bxs| \\
     &=|\bphi_{\bl[j]}^\sT (I-P\{\bphi_{\bl[-j]}\})\bphi_{\bss/\bl[-j]}\bxs_{\bss/\bl[-j]}|\\
     &\leq |\bphi_{\bl[j]}^\sT \bphi_{\bss/\bl[-j]}\bxs_{\bss/\bl[-j]}| +|\bphi_{\bl[j]}^\sT P\{\bphi_{\bl[-j]}\}\bphi_{\bss/\bl[-j]}\bxs_{\bss/\bl[-j]}|\\
     &\overset{(a)}{\leq} \delta_{m+1}||\bxs_{\bss/\bl[-j]}||_2+||P\{\bphi_{\bl[-j]}\}\bphi_{\bl[j]}||_2||P\{\bphi_{\bl[-j]}\}\bphi_{\bss/\bl[-j]}\bxs_{\bss/\bl[-j]}||_2\\
     &\overset{(b)}{\leq}\delta_{m+1}||\bxs_{\bss/\bl[-j]}||_2 + 
     \frac{\delta_{k}^2}{1-\delta_{k-1}}||\bxs_{\bss/\bl[-j]}||_2\,,
     \end{align*}
     where $(a)$ follows from Lemma~\ref{lemma1} and where $(b)$ holds by Lemma~\ref{lemma2} and the Cauchy inequality. Therefore, from the monotonicity of the RIP constant we get
     \begin{equation}\label{eq:dummy4}
     |\bphi_{\bl[j]}^\sT \by^{\perp\bl[-j] }| \leq \frac{\delta_{\ell}}{1-\delta_{\ell}}||\bxs_{\bss/\bl[-j]}||_2.
     \end{equation}
     For the denominator of~\eqref{eq:dummy6} we have $$||\bphi_{\bl[j]}^{\perp \bl[-j]}||_2^2=1-||P\{\bphi_{\bl[-j]}\}\bphi_{\bl[j]}||_2^2,$$ hence by Lemma~\ref{lemma2} we get  $||\bphi_{\bl[j]}^{\perp \bl[-j]}||_2^2 \geq 1-\frac{\delta^2_k}{1-\delta_{k-1}}$. This inequality together with inequalities~\eqref{eq:dummy4} and~\eqref{eq:dummy6} complete the proof. 
 \end{proof}

\begin{lemma}\label{lemma6}
Let $\bs$ be a support vector of size $k$, and $e=|\bss\backslash\bs|$. Under setting~\eqref{compsens}, suppose $k+\ell > m$, $\ell \leq e $, and suppose the design matrix $\bphi$ satisfies $\delta_{k+\ell} \leq 0.5$.  Let $\bl\defeq  \cL(\ell,\bs)$,
$ {\bx}\defeq\cE(\bp)$, and $t\defeq \max\{k+e,\ell+e\}$. Consider $j \in \bl$ such that $|\hbx_j|=||\hbx_{\bl}||_{\infty}$. If $\bss \not\subset\bp$ and $j \notin \bss$, then
$$ ||\bxs_{\bl\cap\bss}||_2 \leq \left(\frac{\delta_{t}(1-\delta^2_{t})\sqrt{2e}}{(1-\delta_t-\delta_t^2)(1-2\delta_t)}-1\right) ||\bxs_{\bss\backslash\bp}||_2.$$
\end{lemma}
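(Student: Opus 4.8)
The plan is to assume $j\notin\bss$ and then bound $|\hbx_j|=\|\hbx_{\bl}\|_\infty$ from above and below; comparing the two bounds will yield the inequality. Throughout I would exploit the monotonicity of the RIP constant together with the inclusions $k+\ell\le t$, $e\le t$, $k+e\le t$, $k\le t$, $\ell\le t$, all of which follow from $t=\max\{k+e,\ell+e\}$ and $\ell\le e$. I would also note that $\bl$ and $\bs$ are disjoint (since $\bphi_i^\sT\by^{\perp\bs}=0$ for $i\in\bs$, the features in $\bs$ have zero correlation with the residual), so that $\bp=\bl\cup\bs$ has size $k+\ell$ and $\bss\backslash\bs$ is the disjoint union of $\bl\cap\bss$ and $\bss\backslash\bp$.

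\emph{Upper bound on $|\hbx_j|$.} Since $k+\ell>m$, the support vector $\bp$ falls under Lemma~\ref{lemma5} with parameter $\max\{m+1,k+\ell\}=k+\ell$, and $\delta_{k+\ell}\le 0.5$ is assumed. Because $j\in\bp$ with $j\notin\bss$, Lemma~\ref{lemma5} gives $|\hbx_j|\le\frac{\delta_{k+\ell}}{1-\delta_{k+\ell}-\delta_{k+\ell}^2}||\bxs_{\bss\backslash\bp}||_2$, and using that $x\mapsto x/(1-x-x^2)$ is increasing this is at most $\frac{\delta_t}{1-\delta_t-\delta_t^2}||\bxs_{\bss\backslash\bp}||_2$.

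\emph{Lower bound on $|\hbx_j|$.} I would write $|\hbx_j|=\|\hbx_{\bl}\|_\infty\ge\frac{1}{\sqrt\ell}\|\hbx_{\bl}\|_2$. By Lemma~\ref{lemma3} with $\bs_1=\bl$, $\bs_2=\bs$, we have $\hbx_{\bl}=(\bphi_{\bl}^\sT(I-P\{\bphi_\bs\})\bphi_{\bl})^{-1}\bphi_{\bl}^\sT\by^{\perp\bs}$, and Lemma~\ref{lemma4} then yields $\|\hbx_{\bl}\|_2\ge\frac{1}{1+\delta_\ell}||\bphi_{\bl}^\sT\by^{\perp\bs}||_2$. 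Since $\bl=\cL(\ell,\bs)$ collects the $\ell$ features with the largest $|\bphi_i^\sT\by^{\perp\bs}|$ and $|\bss\backslash\bs|=e\ge\ell$, the sum of these $\ell$ largest squared inner products is at least $(\ell/e)\,||\bphi_{\bss\backslash\bs}^\sT\by^{\perp\bs}||_2^2$, so $\frac{1}{\sqrt\ell}||\bphi_{\bl}^\sT\by^{\perp\bs}||_2\ge\frac{1}{\sqrt e}||\bphi_{\bss\backslash\bs}^\sT\by^{\perp\bs}||_2$. Then, writing $\by^{\perp\bs}=(I-P\{\bphi_\bs\})\bphi_{\bss\backslash\bs}\bxs_{\bss\backslash\bs}$, the triangle inequality, the RIP lower bound $||\bphi_{\bss\backslash\bs}^\sT\bphi_{\bss\backslash\bs}\bxs_{\bss\backslash\bs}||_2\ge(1-\delta_e)||\bxs_{\bss\backslash\bs}||_2$, and Lemma~\ref{lemma2} give $||\bphi_{\bss\backslash\bs}^\sT\by^{\perp\bs}||_2\ge\big(1-\delta_e-\frac{\delta_{k+e}^2}{1-\delta_k}\big)||\bxs_{\bss\backslash\bs}||_2$. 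Collecting everything and using monotonicity ($\delta_\ell,\delta_e,\delta_{k+e},\delta_k\le\delta_t$) together with the identity $1-\delta_t-\frac{\delta_t^2}{1-\delta_t}=\frac{1-2\delta_t}{1-\delta_t}$ yields $|\hbx_j|\ge\frac{1-2\delta_t}{\sqrt e\,(1-\delta_t^2)}||\bxs_{\bss\backslash\bs}||_2$.

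\emph{Conclusion.} Comparing the two bounds gives $||\bxs_{\bss\backslash\bs}||_2\le\frac{\delta_t(1-\delta_t^2)\sqrt e}{(1-\delta_t-\delta_t^2)(1-2\delta_t)}||\bxs_{\bss\backslash\bp}||_2$. Since $\bss\backslash\bs$ is the disjoint union of $\bl\cap\bss$ and $\bss\backslash\bp$, $\sqrt2\,||\bxs_{\bss\backslash\bs}||_2=\sqrt2\big(||\bxs_{\bl\cap\bss}||_2^2+||\bxs_{\bss\backslash\bp}||_2^2\big)^{1/2}\ge||\bxs_{\bl\cap\bss}||_2+||\bxs_{\bss\backslash\bp}||_2$; rearranging and substituting the previous bound (and using $\sqrt2\sqrt e=\sqrt{2e}$) gives the claimed inequality. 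I expect the main obstacle to be the lower-bound chain: correctly using the fact that $\cL(\ell,\bs)$ selects the top-$\ell$ correlations together with $\ell\le e$ to extract the $\sqrt{\ell/e}$ factor, and then collapsing $\frac{1}{1+\delta_\ell}\big(1-\delta_e-\frac{\delta_{k+e}^2}{1-\delta_k}\big)$ into the clean factor $\frac{1-2\delta_t}{1-\delta_t^2}$ via RIP-constant monotonicity, which is implicitly where the regime $\delta_t<1/2$ enters.
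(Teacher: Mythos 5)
Your proposal is correct and follows essentially the same route as the paper's proof: upper bound $|\hat{\bx}_j|$ via Lemma~\ref{lemma5} applied to $\bp$, lower bound $\|\hat{\bx}_{\bl}\|_\infty$ through Lemmas~\ref{lemma3}, \ref{lemma4}, the top-$\ell$ selection property with $\ell\leq e$, and Lemma~\ref{lemma2}, then combine with $\sqrt{2}\,\|\bxs_{\bss\backslash\bs}\|_2\geq\|\bxs_{\bl\cap\bss}\|_2+\|\bxs_{\bss\backslash\bp}\|_2$. Your explicit remark that $\bl$ and $\bs$ are disjoint (needed for the decomposition $\bss\backslash\bs=(\bl\cap\bss)\cup(\bss\backslash\bp)$ and for applying Lemma~\ref{lemma3}) is a detail the paper leaves implicit.
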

\begin{proof}
From Lemma~\ref{lemma5} we have 
\begin{equation}\label{eq:dummy7}
|\hbx_j|\leq \frac{\delta_{k+\ell}}{1-\delta_{k+\ell}-\delta_{k+\ell}^2} ||\bxs_{\bss\backslash\bp}||_2.
\end{equation}
On the other hand, we have
  \begin{align*}
    ||\hbx_{\bl}||_{\infty}&=||(\bphi_{\bl}^\sT (I-P\{\bphi_{\bs}\})\bphi_{\bl})^{-1}\bphi_{\bl}^\sT \by^{\perp \bs}||_{\infty}\\
    &\geq \frac{1}{\sqrt{\ell}}
    ||(\bphi_{\bl}^\sT (I-P\{\bphi_{\bs}\})\bphi_{\bl})^{-1}\bphi_{\bl}^\sT \by^{\perp \bs}||_2 \\
    &\overset{(a)}{\geq} \frac{1}{\sqrt{\ell}(1+\delta_{\ell})}||\bphi_{\bl}^\sT \by^{\perp \bs}||_2 \\
    &\overset{(b)}{\geq} \frac{1}{\sqrt{e}(1+\delta_{\ell})} || \bphi_{\bss\backslash\bs}^\sT \by^{\perp \bs}||_2 \\
    &\overset{(c)}{\geq} \frac{1}{\sqrt{e}(1+\delta_{\ell})}\big( ||\bphi_{\bss\backslash\bs}^\sT \bphi_{\bss\backslash\bs}
    \bxs_{\bss\backslash \bs}||_2-\\
    &||\bphi_{\bss\backslash\bs}^\sT P\{\bphi_{\bs}\}\bphi_{\bss\backslash\bs} \bxs_{\bss\backslash \bs}||_2\big)\\
    &\overset{(d)}{\geq} \frac{1}{\sqrt{e}(1+\delta_{\ell})} \left((1-\delta_e)||\bxs_{\bss\backslash\bs}||_2-\frac{\delta^2_{k+e}}{1-\delta_{k}}||\bxs_{\bss\backslash\bs}||_2\right)\,,
    \end{align*}
where $(a)$ follows from Lemma \ref{lemma4}; where $(b)$ is a direct result from the definition of $\bl$ along with the fact that $\bss\backslash \bp$ is non-empty and $|\bl|\leq e$; where $(c)$ follows from definition of $\lsr{\by}{\bs}$ and the triangle inequality; and where $(d)$ follows from the RIP definition together with Lemma~\ref{lemma2}. Next, by combining the above inequality with~\eqref{eq:dummy7} and using the monotonicity of the RIP constant, we deduce that if $j \notin \bss$, then 
$$\frac{\delta_t}{1-\delta_t-\delta_t^2}||\bxs_{\bss\backslash\bp}||_2 \geq \frac{(1-2\delta_t)}{\sqrt{e}(1-\delta_t^2)}||\bxs_{\bss\backslash\bs}||_2\,.$$
This implies 
\begin{equation}\label{eq:dummy5}
||\bxs_{\bss\backslash\bs}||_2 \leq \frac{\delta_{t}(1-\delta^2_{t})\sqrt{e}}{(1-\delta_t-\delta_t^2)(1-2\delta_t)}||\bxs_{\bss\backslash\bp}||_2 .
\end{equation}
Now, since $\bss\backslash\bs=(\bl\cap\bss) \cup (\bss\backslash\bp)$, we have $\sqrt{2}||\bxs_{\bss\backslash\bs}||_2\geq||\bxs_{\bl\cap\bss}||_2+||\bxs_{\bss\backslash\bp}||_2 $. Using this inequality in~\eqref{eq:dummy5} completes the proof.
\end{proof}

\begin{lemma}\label{lemma7}
Let $\bs$ be a support vector of size $k$. Let $\bl\defeq \cL(\ell,\bs)$,  $\bp\defeq \bs\cup\bl$, $e\defeq |\bss \backslash \bs|$, and $t\defeq \max\{k+e,e+\ell \}$. If $\bss \not\subset \bp$, then 
\begin{align*}
    &\left( \frac{1+\delta_t-\delta_{t}^2}{1-\delta_{t}}\right)||\bxs_{\bl\cap\bss}||_2 + \left( \frac{2\delta_{t}-\delta_{t}^2}{1-\delta_{k}}\right)||\bxs_{\bss\backslash\bp}||_2 \geq \Big( \frac{1-2\delta_{t}}{1-\delta_t}||\bxs_{\bss\backslash\bp}||_2-
 \frac{\delta_t}{1-\delta_t}||\bxs_{\bss\cap\bl}||_2 \Big)\cdot\sqrt{\frac{\ell}{{|\bss\backslash \bp|}}}
\end{align*}
\end{lemma}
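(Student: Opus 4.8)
The plan is to bound an upper estimate on $||\bphi_{\bl}^\sT \lsr{\by}{\bs}||_2$ against a lower estimate on $||\bphi_{\bss\backslash\bp}^\sT \lsr{\by}{\bs}||_2$, the two being tied together by the extremal property of the list.

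\emph{The list inequality.} Since $\bl=\cL(\ell,\bs)$ collects the $\ell$ features with largest $|\bphi_i^\sT \lsr{\by}{\bs}|$, and since $\bss\not\subset\bp$ makes $\bss\backslash\bp$ a non-empty support vector disjoint from $\bl$ (we may take $\bl$ disjoint from $\bs$ because the residual $\lsr{\by}{\bs}$ is orthogonal to the columns of $\bphi_{\bs}$), each coordinate of $\bphi_{\bl}^\sT \lsr{\by}{\bs}$ dominates in absolute value each coordinate of $\bphi_{\bss\backslash\bp}^\sT \lsr{\by}{\bs}$. Hence $||\bphi_{\bl}^\sT \lsr{\by}{\bs}||_1\geq \ell\,||\bphi_{\bss\backslash\bp}^\sT \lsr{\by}{\bs}||_\infty$, and combining $||v||_1\leq \sqrt{\ell}\,||v||_2$ with $||w||_\infty\geq ||w||_2/\sqrt{|\bss\backslash\bp|}$ yields
\begin{align*}
||\bphi_{\bl}^\sT \lsr{\by}{\bs}||_2\;\geq\; \sqrt{\tfrac{\ell}{|\bss\backslash\bp|}}\;||\bphi_{\bss\backslash\bp}^\sT \lsr{\by}{\bs}||_2.
\end{align*}

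\emph{The two norm estimates.} I would use $\lsr{\by}{\bs}=(I-P\{\bphi_{\bs}\})\bphi_{\bss\backslash\bs}\bxs_{\bss\backslash\bs}$, which holds because $(I-P\{\bphi_{\bs}\})\bphi_{\bss\cap\bs}=0$. For the lower bound on $||\bphi_{\bss\backslash\bp}^\sT \lsr{\by}{\bs}||_2$: peel off the projection term via Lemma~\ref{lemma2} (bounded by $\tfrac{\delta_{k+e}^2}{1-\delta_k}||\bxs_{\bss\backslash\bs}||_2$), then split $\bss\backslash\bs=(\bss\cap\bl)\cup(\bss\backslash\bp)$ as a disjoint union so that the leading term is $\bphi_{\bss\backslash\bp}^\sT\bphi_{\bss\backslash\bp}\bxs_{\bss\backslash\bp}$, of norm at least $(1-\delta_e)||\bxs_{\bss\backslash\bp}||_2$ by RIP, minus a cross term $\bphi_{\bss\backslash\bp}^\sT\bphi_{\bss\cap\bl}\bxs_{\bss\cap\bl}$ handled by Lemma~\ref{lemma1}. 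For the upper bound on $||\bphi_{\bl}^\sT \lsr{\by}{\bs}||_2$: again peel off the projection term via Lemma~\ref{lemma2} (bounded by $\tfrac{\delta_{k+e}\delta_{k+\ell}}{1-\delta_k}||\bxs_{\bss\backslash\bs}||_2$), then split $\bl=(\bl\backslash\bss)\cup(\bl\cap\bss)$: the first piece is disjoint from $\bss\backslash\bs$, so Lemma~\ref{lemma1} applies directly; for the second piece, decompose $\bss\backslash\bs$ once more into $(\bss\cap\bl)\cup(\bss\backslash\bp)$ and bound the diagonal block $\bphi_{\bss\cap\bl}^\sT\bphi_{\bss\cap\bl}\bxs_{\bss\cap\bl}$ by RIP and the off-diagonal block by Lemma~\ref{lemma1}. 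In both estimates I would close by replacing $||\bxs_{\bss\backslash\bs}||_2$ with $||\bxs_{\bss\cap\bl}||_2+||\bxs_{\bss\backslash\bp}||_2$.

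\emph{Combining.} Plugging the upper bound for $||\bphi_{\bl}^\sT \lsr{\by}{\bs}||_2$ and the lower bound for $||\bphi_{\bss\backslash\bp}^\sT \lsr{\by}{\bs}||_2$ into the list inequality gives an inequality of exactly the claimed shape, whose coefficients are sums of products of RIP constants of orders $e$, $k$, $\ell+e$, $k+e$, and $k+\ell$. Since $t=\max\{k+e,\ell+e\}$ and, in the relevant regime $\ell\leq e$, one has $k+\ell\leq k+e\leq t$, monotonicity of the RIP constant lets me replace all of these by $\delta_t$ while keeping $\delta_k$ in the (harmless) denominators; then the algebraic identities $1+2\delta_t+\tfrac{\delta_t^2}{1-\delta_t}=\tfrac{1+\delta_t-\delta_t^2}{1-\delta_t}$ and $\delta_t+\tfrac{\delta_t^2}{1-\delta_t}=\tfrac{\delta_t}{1-\delta_t}$ collapse the coefficients onto the displayed ones. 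I expect the only real obstacle to be the bookkeeping in this last step: confirming that every RIP subscript produced along the way is $\leq t$, checking the disjointness hypotheses of Lemmas~\ref{lemma1} and~\ref{lemma2} at each of the several invocations, and rearranging the many triangle-inequality fragments so they consolidate precisely into the coefficients $\tfrac{1+\delta_t-\delta_t^2}{1-\delta_t}$ and $\tfrac{2\delta_t-\delta_t^2}{1-\delta_k}$ rather than slightly looser surrogates.
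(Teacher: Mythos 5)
Your proposal follows the paper's proof essentially step for step: the same list inequality $||\bphi_{\bl}^\sT \lsr{\by}{\bs}||_2 \geq \sqrt{\ell/|\bss\backslash\bp|}\,||\bphi_{\bss\backslash\bp}^\sT \lsr{\by}{\bs}||_2$, the same lower bound on the $\bss\backslash\bp$ side and upper bound on the $\bl$ side via the identical decompositions and invocations of Lemmas~\ref{lemma1} and~\ref{lemma2}, and the same final collapse by RIP monotonicity. The two caveats you flag are real but harmless in context: the step $\delta_{k+\ell}\leq\delta_t$ does require $\ell\leq e$ (which holds wherever the lemma is invoked), and the coefficient of $||\bxs_{\bss\backslash\bp}||_2$ on the left actually comes out as $2\delta_t+\delta_t^2/(1-\delta_k)$, which matches the displayed $\frac{2\delta_t-\delta_t^2}{1-\delta_k}$ only after also relaxing that denominator to $1-\delta_t$.
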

\begin{proof}[Proof of Lemma \ref{lemma7}]
From the definition of $\cL(\ell,\bs)$ and the fact that $\bss\backslash \bp$ is non-empty, we have $\frac{||\bphi_{\bl}^\sT \lsr{\by}{\bs}||_2}{\sqrt{\ell}} \geq ||\bphi_{\bss\backslash \bp}^\sT \lsr{\by}{\bs}||_{\infty} $. Then, since $||\bphi_{\bss\backslash \bp}^\sT \lsr{\by}{\bs}||_{\infty}\geq \frac{||\bphi_{\bss\backslash \bp}^\sT \lsr{\by}{\bs}||_{2}}{\sqrt{| \bss\backslash \bp|}},$ we get 
\begin{equation}\label{eq3}
||\bphi_{\bl}^\sT \lsr{\by}{\bs}||_2 \geq \sqrt{\frac{\ell}{{|\bss\backslash \bp|}}}||\bphi_{\bss\backslash \bp}^\sT \lsr{\by}{\bs}||_{2}.\end{equation}
Now, on the hand we have 
\begin{align}
||\bphi_{\bss\backslash \bp}^\sT \lsr{\by}{\bs}||_{2}&=||\bphi_{\bss\backslash \bp}^\sT (I-P\{\bphi_{\bs}\})\bphi_{\bss\backslash \bs}\bxs_{\bss\backslash \bs}||_{2}\notag\\
&\geq ||\bphi_{\bss\backslash \bp}^\sT \bphi_{\bss\backslash \bs}\bxs_{\bss\backslash \bs}||_{2} -||\bphi_{\bss\backslash \bp}^\sT P\{\bphi_{\bs}\}\bphi_{\bss\backslash \bs}\bxs_{\bss\backslash \bs} ||_{2}\notag\\
&\overset{(a)}{\geq} ||\bphi_{\bss\backslash \bp}^\sT \bphi_{\bss\backslash \bs}\bxs_{\bss\backslash \bs}||_{2} - \frac{\delta_{k+e}^2}{1-\delta_k}||\bxs_{\bss\backslash \bs}||_2\notag \\ 
&\overset{(b)}{\geq} ||\bphi_{\bss\backslash \bp}^\sT \bphi_{\bss\backslash \bp}\bxs_{\bss\backslash \bp}||_{2}- ||\bphi_{\bss\backslash \bp}^\sT \bphi_{\bss\cap\bl}\bxs_{\bss\cap\bl}||_{2}-  \frac{\delta^2_{k+e}}{1-\delta_k} ||\bxs_{\bss\backslash \bs}||_2 \notag\\
&\overset{(c)}{\geq} ||\bphi_{\bss\backslash \bp}^\sT \bphi_{\bss\backslash \bp}\bxs_{\bss\backslash \bp}||_{2}- \delta_{e}||\bxs_{\bss\cap\bl}||_{2}-
\frac{\delta^2_{k+e}}{1-\delta_k} ||\bxs_{\bss\backslash \bs}||_2 \notag\\ 
&\overset{(d)}{\geq} (1-\delta_{|\bss\backslash \bp|})||\bxs_{\bss\backslash \bp}||_2-\delta_{e}||\bxs_{\bss\cap\bl}||_{2}- \frac{\delta^2_{k+e}}{1-\delta_k} ||\bxs_{\bss\backslash \bs}||_2 \notag\\
&\overset{(e)}{\geq} (1-\delta_{e})||\bxs_{\bss\backslash \bp}||_2-\delta_{e}||\bxs_{\bss\cap\bl}||_{2}- \frac{\delta^2_{k+e}}{1-\delta_k} ||\bxs_{\bss\backslash \bs}||_2 \notag\\
&\overset{(b)}{\geq} (1-\delta_e-\frac{\delta^2_{k+e}}{1-\delta_k})||\bxs_{\bss\backslash \bp}||_2 -
(\delta_e+ \frac{\delta_{k+e}^2}{1-\delta_k})||\bxs_{\bss\cap\bl}||_2\,, \label{eq1}
\end{align}
where $(a)$ holds by Lemma~\ref{lemma2}; where $(b)$ follows from the triangle inequality and the identity $\bss\backslash \bs=(\bss\cap\bl) \cup (\bss\backslash \bp)$; where $(c)$ is a direct result of Lemma~\ref{lemma1}; where $(d)$ follows from the definition of RIP constant; and where $(e)$ holds by the monotonicity of the RIP constant and the fact that $|\bss\backslash \bp| \leq e$. 

On the other hand, we have 
\begin{align}
    ||\bphi_{\bl}^\sT \lsr{\by}{\bs}||_2&=||\bphi_{\bl}^\sT (I-P\{\bphi_{\bs}\})\bphi_{\bss}\bxs_{\bss}||_2 \notag \\
    &=||\bphi_{\bl}^\sT (I-P\{\bphi_{\bs}\})\bphi_{\bss\backslash \bs}\bxs_{\bss\backslash \bs}||_2\notag \\
    &\leq ||\bphi_{\bl}^\sT \bphi_{\bss\backslash \bs}\bxs_{\bss\backslash \bs}||_2 + 
    ||\bphi_{\bl}^\sT P\{\bphi_{\bs}\}\bphi_{\bss\backslash \bs}\bxs_{\bss\backslash \bs}||_2\notag \\
    &\overset{(a)}{\leq} ||\bphi_{\bl}^\sT \bphi_{\bss\backslash \bs}\bxs_{\bss\backslash \bs}||_2 + \frac{\delta_{e+k}\delta_{k+\ell}}{1-\delta_{k}}||\bxs_{\bss\backslash \bs}||_2 \notag \\
    &\overset{(b)}{\leq} ||\bphi_{\bl\backslash \bss}^\sT \bphi_{\bss\backslash \bs}\bxs_{\bss\backslash \bs}||_2 + ||\bphi_{\bl\cap\bss}^\sT \bphi_{\bss\backslash \bs}\bxs_{\bss\backslash \bs}||_2+
    \frac{\delta_{e+k}\delta_{k+\ell}}{1-\delta_{k}}||\bxs_{\bss\backslash \bs}||_2\notag \\
    &\overset{(c)}{\leq}(\delta_{\ell+e}+\frac{\delta_{e+k}\delta_{k+\ell}}{1-\delta_{k}})||\bxs_{\bss\backslash \bs}||_2+||\bphi_{\bl\cap\bss}^\sT \bphi_{\bss\backslash \bs}\bxs_{\bss\backslash \bs}||_2\notag \\
    & \overset{(d)}{\leq} (\delta_{\ell+e}+\frac{\delta_{e+k}\delta_{k+\ell}}{1-\delta_{k}})||\bxs_{\bss\backslash \bs}||_2 + ||\bphi_{\bl\cap\bss}^\sT \bphi_{\bl\cap\bss}\bxs_{\bl\cap\bss}||_2 +||\bphi_{\bl\cap\bss}^\sT \bphi_{\bss\backslash \bp}\bxs_{\bss\backslash \bp}||_2\notag \\
    & \overset{(e)}{\leq} (\delta_{\ell+e}+\frac{\delta_{e+k}\delta_{k+\ell}}{1-\delta_{k}})||\bxs_{\bss\backslash \bs}||_2  + (1+\delta_e)||\bxs_{\bl\cap \bss}||_2 +||\bphi_{\bl\cap\bss}^\sT \bphi_{\bss\backslash \bp}\bxs_{\bss\backslash \bp}||_2 \notag\\
    &\overset{(f)}{\leq} (\delta_{\ell+e}+\frac{\delta_{e+k}\delta_{k+\ell}}{1-\delta_{k}})||\bxs_{\bss\backslash \bs}||_2  + (1+\delta_e)||\bxs_{\bl\cap \bss}||_2 +\delta_e || \bxs_{\bss\backslash \bp} ||_2\notag \\
    &\leq (1+\delta_e+\delta_{\ell+e}+ \frac{\delta_{e+k}\delta_{k+\ell}}{1-\delta_{k}})||\bxs_{\bl\cap\bss}||_2 + (\delta_e+\delta_{\ell+e}+ \frac{\delta_{e+k}\delta_{k+\ell}}{1-\delta_{k}})||\bxs_{\bss\backslash \bp}||_2\,, \label{eq2}
    \end{align}
where $(a)$ follows from Lemma~\ref{lemma2}; where $(b)$ follows from the triangle inequality applied to identity $\bl=(\bl\backslash \bss)\cup (\bl\cap \bss)$; where $(c)$ follows from Lemma~\ref{lemma1} and from the fact that $|\bl\backslash \bss|\leq \ell$; where $(d)$ follows from the triangle inequality applied to identity $\bss\backslash \bs=(\bss\backslash \bp) \cup (\bss\cap\bl)$; where $(e)$ follows from the definition of RIP and the fact that $|\bss\cap\bl|\leq e$; and where $(f)$ follows from Lemma~\ref{lemma1} and the fact that $e=|\bss\cap\bl|+|\bss\backslash \bp|$. 

Finally, combine \eqref{eq1}, \eqref{eq2}, and \eqref{eq3} along with $t=\max\{k+e,\ell+e\}$ and the monotonicity of the RIP constant to get the desired result.  
\end{proof}
\begin{proposition}\label{prop1}
Consider the setting in \eqref{compsens}. Let $\bs$ be a support vector of size $k$ and let $e\defeq|\bss\backslash \bs|$. For an integer $\ell$ that satisfies $\ell\leq e$ and $k+\ell>m$ let $\bl\defeq \cL(\ell,\bs )$, $\bp\defeq \bl\cup\bs$,
$\hat{\bx}\defeq \cE(\bp)$, and $t\defeq \max\{k+e,\ell+e \}.$ Suppose $j \in \bl$ is such that $|\hbx_j|=||\hbx_{\bl}||_{\infty}$. Furthermore, assume $\delta_{k+\ell} \leq 0.5$ and $\bss \not\subset\bp$. If the two inequalities
$$\sqrt{e} \leq \frac{\sqrt{2}(1-\delta_t-\delta_t^2)(1-2\delta_t) }{(1-\delta_t^2+2\delta_t)\delta_t(1+\delta_t)} \,,$$
and
$$\sqrt{\ell} > \frac{(1-\delta_t^2+\delta_t)(\eta\sqrt{e}-1)+2\delta_t-\delta_t^2}{1-2\delta_t-\delta_t(\eta\sqrt{e}-1)}\sqrt{e}\,$$
hold
with $\eta\defeq\frac{(1-\delta_t^2)\delta_t\sqrt{2}}{(1-\delta_t-\delta_t^2)(1-2\delta_t)}$, then $j \in \bss$.
\end{proposition}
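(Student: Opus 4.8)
The plan is to argue by contradiction: I would assume $j\notin\bss$ and show that the first displayed hypothesis (an upper bound on $\sqrt e$) and the second (a strict lower bound on $\sqrt\ell$) cannot hold simultaneously. First I would record the facts that make every subsequent step legitimate. Since $\bss\not\subset\bp$, the support vector $\bss\backslash\bp$ is non-empty, so — because $\bss=\text{supp}(\bxs)$, hence $\bxs$ has no zero coordinate on $\bss$ — we get $\|\bxs_{\bss\backslash\bp}\|_2>0$; the same non-emptiness forces $e\geq1$, and $\bp\supseteq\bs$ gives $|\bss\backslash\bp|\leq|\bss\backslash\bs|=e$. Also $\delta_k\leq\delta_{k+\ell}\leq\delta_t$ by monotonicity of the RIP constant (using $k\leq k+\ell\leq t$, the latter since $\ell\leq e$ and $t=\max\{k+e,\ell+e\}$), and $\delta_k\leq\delta_{k+\ell}\leq0.5$. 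The remaining hypotheses of Lemma~\ref{lemma6} ($k+\ell>m$, $\ell\leq e$, $\delta_{k+\ell}\leq0.5$, and $j\in\bl$ with $|\hat{\bx}_j|=\|\hat{\bx}_\bl\|_\infty$ for $\hat{\bx}=\cE(\bp)$) and of Lemma~\ref{lemma7} ($\bl=\cL(\ell,\bs)$, $\bss\not\subset\bp$) all hold, and the $t$ in both lemmas equals the $t$ of the proposition.

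Next I would chain the two lemmas. Write $A\defeq\|\bxs_{\bl\cap\bss}\|_2=\|\bxs_{\bss\cap\bl}\|_2$ and $B\defeq\|\bxs_{\bss\backslash\bp}\|_2>0$. Lemma~\ref{lemma6}, under $j\notin\bss$, gives $A\leq(\eta\sqrt e-1)B$. If $\eta\sqrt e<1$ this already yields $0\leq A\leq(\eta\sqrt e-1)B<0$, a contradiction, so $j\in\bss$; hence I may assume $\eta\sqrt e\geq1$, and then one checks (using $e\geq1$) that the first hypothesis is incompatible with $\delta_t\geq0.5$, so $1-\delta_k$, $1-\delta_t$, $1-2\delta_t$, $1-\delta_t-\delta_t^2$ are all positive. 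Now substitute $A\leq(\eta\sqrt e-1)B$ into the inequality of Lemma~\ref{lemma7},
$$\frac{1+\delta_t-\delta_t^2}{1-\delta_t}\,A+\frac{2\delta_t-\delta_t^2}{1-\delta_k}\,B\;\geq\;\Big(\frac{1-2\delta_t}{1-\delta_t}\,B-\frac{\delta_t}{1-\delta_t}\,A\Big)\sqrt{\frac{\ell}{|\bss\backslash\bp|}}\,,$$
tracking the two opposite roles of $A$: on the left $A$ has a positive coefficient, so its upper bound (together with $\delta_k\leq\delta_t$) bounds the left side above by $\frac{(1+\delta_t-\delta_t^2)(\eta\sqrt e-1)+2\delta_t-\delta_t^2}{1-\delta_t}\,B$; on the right $A$ sits in the bracket with a negative coefficient, so the same bound bounds that bracket below by $\frac{B}{1-\delta_t}\big(1-2\delta_t-\delta_t(\eta\sqrt e-1)\big)$, and the first hypothesis is exactly (a strengthening of) the assertion $1-2\delta_t-\delta_t(\eta\sqrt e-1)>0$. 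Using also $|\bss\backslash\bp|\leq e$ so that $\sqrt{\ell/|\bss\backslash\bp|}\geq\sqrt{\ell/e}$, then dividing by $B/(1-\delta_t)>0$ and by $1-2\delta_t-\delta_t(\eta\sqrt e-1)>0$, I arrive at
$$\sqrt{\ell}\;\leq\;\frac{(1-\delta_t^2+\delta_t)(\eta\sqrt e-1)+2\delta_t-\delta_t^2}{1-2\delta_t-\delta_t(\eta\sqrt e-1)}\,\sqrt e\,,$$
which contradicts the second hypothesis. Hence $j\in\bss$.

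The calculation is otherwise routine algebra; the one place that needs real care — and the only place where the two hypotheses are actually used — is the sign bookkeeping in the substitution. The quantity $\|\bxs_{\bl\cap\bss}\|_2$ enters Lemma~\ref{lemma7} with opposite signs on the two sides, so replacing it by its Lemma~\ref{lemma6} upper bound must be argued to simultaneously shrink the left side and enlarge the right side; and before dividing one must know $\|\bxs_{\bss\backslash\bp}\|_2>0$ and that the denominator $1-2\delta_t-\delta_t(\eta\sqrt e-1)$ built up along the way is positive, which is precisely what the $\sqrt e$-bound buys. The trivial $\eta\sqrt e<1$ branch is what lets the argument cover all admissible RIP ranges without any extra standing assumption such as $\delta_t<0.5$.
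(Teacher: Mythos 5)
Your proof is correct and follows essentially the same route as the paper's: assume $j\notin\bss$, feed the bound of Lemma~\ref{lemma6} into the inequality of Lemma~\ref{lemma7}, use $|\bss\backslash\bp|\leq e$ and the positivity of $1-2\delta_t-\delta_t(\eta\sqrt{e}-1)$ (guaranteed by the first hypothesis) to divide through, and contradict the lower bound on $\sqrt{\ell}$. The only difference is that you make explicit several sign and non-degeneracy checks ($\|\bxs_{\bss\backslash\bp}\|_2>0$, the trivial $\eta\sqrt{e}<1$ branch, the opposite roles of $\|\bxs_{\bl\cap\bss}\|_2$ on the two sides) that the paper leaves implicit.
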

\begin{proof}
Assume $j\notin \bss $. Then, from Lemma~\ref{lemma6} we get  
\begin{equation}\label{eq:dummy8}
||\bxs_{\bl\cap\bss}||_2 \leq {(\eta\sqrt{e}-1})||\bxs_{\bss\backslash \bp}||_2.
\end{equation}
From~\eqref{eq:dummy8} and Lemma~\ref{lemma7} we get 
\begin{align*}
    &||\bxs_{\bss\backslash \bp}||_2\left( \frac{1+\delta_t-\delta_t^2}{1-\delta_t}(\eta\sqrt{e}-1)+\frac{2\delta_t-\delta_t^2}{1-\delta} \right) \geq ||\bxs_{\bss\backslash \bp}||_2\left( \frac{1-2\delta_t}{1-\delta_t}-\frac{\delta_t}{1-\delta_t}(\eta\sqrt{e}-1)\right)\cdot\sqrt{\frac{\ell}{|\bss\backslash \bp|}}\,.
\end{align*}
Next, since $1-2\delta_t \geq \delta_t(\eta\sqrt{e}-1)$ and $|\bss\backslash\bp| \leq e$ we get
\begin{align*}
   &\left(\frac{1+\delta_t-\delta_t^2}{1-\delta_t}(\eta\sqrt{e}-1)+\frac{2\delta_t-\delta_t^2}{1-\delta} \right) \geq \left( \frac{1-2\delta_t}{1-\delta_t}-\frac{\delta_t}{1-\delta_t}(\eta\sqrt{e}-1)\right)\sqrt{\frac{\ell}{e}}\,,
\end{align*}
which contradicts the assumed lower bound on $\ell$. Therefore the initial assumption cannot hold and $j$ must belong to $\bss$.
\end{proof}

\begin{proof}[Proof of Theorem\ref{thm1}]
We start with the first step of LiRE ($i=1$). Let $\bs\defeq\bs_{\text{out}}[-1]$, $\bl\defeq \cL(\ell,\bs), \bp\defeq \bl\cup\bs,$ and  $\hat{\bx}\defeq\cE(\bp)$. Pick $j \in \bl$ such that $|\hat{\bx}_j|=||\bx_{\bl}||_{\infty}$. Note that since $|\bss\backslash \bs|\leq |\bss\backslash \bs_{\text{out}}|+1$, we have  $|\bss\backslash \bs|\leq e+1$. We have the following three possible cases:

\noindent \textbf{Case 1:} $\bss\not\subset\bp$. Here the assumptions of Proposition ~\ref{prop1} are satisfied and therefore $j \in \bss$. This implies $\bss_{\text{out}}[1] \in \bss$, hence the first component of the estimated support is corrected.

\noindent \textbf{Case 2:} $\bss\subset\bp$ and $\bss \subset \bs_{\text{out}}[-1]$.
Since $\bss \subset \bs_{\text{out}}[-1]$, we have $\by^{\perp \bs_{\text{out}}[-1]}=0$ and LiRE will exit the {\bf{for}} loop and output $\bs_{\text{out}}$. Moreover, we get  $\bss \subset \bs_{\text{out}} $ which completes the proof.

\noindent\textbf{Case 3:} $\bss \subset \bp$ and $\bss \not\subset \bs_{\text{out}}[-1]$. The next lemma characterizes any true feature in the support estimate. 

\begin{lemma}\label{lemma8}
 If $\bss \subset \bp$ and $\bss \not\subset \bs_{\text{out}}[-1]$, then for every $i$ in $\bl$ we have $i\in \bss$ if and only if $|\hat{\bx}_{i}|>0$.
\end{lemma}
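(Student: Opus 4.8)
The plan is to prove the two implications of the equivalence separately, and the first thing I would record is that under the standing hypotheses of this case we have $\bss\subset\bp$ and $|\bp|=|\bs\cup\bl|\le(m-1)+\ell=m+\ell-1$, so by \eqref{sqe3} and the monotonicity of the RIP constant $\delta_{|\bp|}\le\delta_{m+\ell-1}<0.5$; this will license every application of Lemmas~\ref{lemma2}--\ref{lemma5} to the support vector $\bp$. For the direction ``$i\notin\bss\Rightarrow\hat{\bx}_i=0$'', I would first dispose of $\ell=1$: then the hypotheses $\bss\subset\bp$ and $\bss\not\subset\bs$ force $\bl\cap\bss\neq\emptyset$, hence $\bl\subseteq\bss$, so no $i\in\bl$ lies outside $\bss$ and the claim is vacuous. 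For $\ell\ge2$ one has $\max\{m+1,|\bp|\}\le m+\ell-1$, so Lemma~\ref{lemma5} applies to $\bp$ (with $\hat{\bx}=\cE(\bp)$) and yields $|\hat{\bx}_i|\le\frac{\delta_{\ell'}}{1-\delta_{\ell'}-\delta_{\ell'}^2}||\bxs_{\bss\backslash\bp}||_2$ with $\ell'=\max\{m+1,|\bp|\}$; since $\bss\subset\bp$ the right-hand side is $0$.

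For the direction ``$i\in\bss\Rightarrow\hat{\bx}_i\neq0$'' I would argue by contradiction: assume $\hat{\bx}_i=0$. Decomposing $\bp=\{i\}\cup(\bp\backslash\{i\})$ and invoking Lemma~\ref{lemma3} (valid since $\delta_{|\bp|}\le0.5$) gives the scalar identity $\hat{\bx}_i=\frac{\bphi_i^\sT\lsr{\by}{\bp\backslash\{i\}}}{||\bphi_i^{\perp\bp\backslash\{i\}}||_2^2}$, so $\hat{\bx}_i=0$ forces $\bphi_i^\sT\lsr{\by}{\bp\backslash\{i\}}=0$. I would then substitute $\by=\bphi_{\bss}\bxs_{\bss}$: since $\bss\subset\bp$ and $i\in\bss$, every column $\bphi_j$ of $\bphi_{\bss}$ with $j\neq i$ is a column of $\bphi_{\bp\backslash\{i\}}$ and hence is annihilated by $I-P\{\bphi_{\bp\backslash\{i\}}\}$, leaving $\lsr{\by}{\bp\backslash\{i\}}=(I-P\{\bphi_{\bp\backslash\{i\}}\})\bphi_i\bxs_i$ and therefore $\bphi_i^\sT\lsr{\by}{\bp\backslash\{i\}}=||\bphi_i^{\perp\bp\backslash\{i\}}||_2^2\,\bxs_i$. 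To finish I would note that $\bxs_i\neq0$ because $i\in\bss=\text{supp}(\bxs)$, and that, using unit-norm columns together with Lemma~\ref{lemma2}, $||\bphi_i^{\perp\bp\backslash\{i\}}||_2^2=1-||P\{\bphi_{\bp\backslash\{i\}}\}\bphi_i||_2^2\ge1-\frac{\delta_{|\bp|}^2}{1-\delta_{|\bp|}}>0$; hence $\bphi_i^\sT\lsr{\by}{\bp\backslash\{i\}}\neq0$, contradicting $\hat{\bx}_i=0$.

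The only genuinely substantive step — and the place I expect to have to be careful — is the collapse $\lsr{\by}{\bp\backslash\{i\}}=(I-P\{\bphi_{\bp\backslash\{i\}}\})\bphi_i\bxs_i$: this is exactly where the hypothesis $\bss\subset\bp$ enters, and it is what makes the ``if'' direction true at all, since without it the residual would carry contributions from true features outside $\bp$ and $\hat{\bx}_i$ could vanish for some $i\in\bss$. Everything else is bookkeeping of which RIP orders appear so that the single inequality \eqref{sqe3} suffices to license Lemmas~\ref{lemma2}--\ref{lemma5}, the $\ell=1$ corner of the first direction being the only subtlety worth a sentence.
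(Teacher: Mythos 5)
Your proof is correct and follows essentially the same route as the paper's: Lemma~\ref{lemma5} applied to $\bp$ gives the ``only if'' direction (the bound involves $||\bxs_{\bss\backslash\bp}||_2=0$), and the ``if'' direction is the same contradiction via the Schur-complement formula of Lemma~\ref{lemma3} reduced, using $\bss\subset\bp$, to showing $\bphi_i^\sT(I-P\{\bphi_{\bp\backslash\{i\}}\})\bphi_i\bxs_i\neq 0$. Your two small refinements — treating the $\ell=1$ corner as vacuous so that \eqref{sqe3} genuinely covers the RIP order needed by Lemma~\ref{lemma5}, and replacing the paper's triangle-inequality lower bound by the exact identity $\bphi_i^\sT\lsr{\by}{\bp\backslash\{i\}}=||\bphi_i^{\perp\bp\backslash\{i\}}||_2^2\,\bxs_i$ — are both sound and slightly tighten the bookkeeping.
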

\noindent We have $\bss\subset{\bs_{\text{out}}[-1] \cup \bl}$ and since $\bss \not\subset \bs_{\text{out}}[-1]$ we deduce that there must exist some feature $i\in \bl\cap \bss$. By Lemma~\ref{lemma8}, we get $|\hat{\bx}_i|>0$. Moreover, the assumption that $|\hat{\bx}_j|=||\hat{\bx}_{\bl}||_{\infty}$ implies $|\hat{\bx}_j|\geq |\hat{\bx}_i|$, hence $|\hat{\bx}_j| >0$. From Lemma~\ref{lemma8} we then conclude that $j \in \bss$. This means that LiRE will add a true feature in this case as well.


The above argument holds for the first iteration of the {\bf{for}} loop of LiRE. However, as we are not adding errors to $\bs_{\text{out}}$, the same argument can be used for the other iterations as well. As there are at most $m$ missed true features outside $\bs_{\text{in}}$, after one round all of them will be included and therefore $\bss \subset \bs_{\text{out}}$. 
\begin{proof}[Proof of Lemma \ref{lemma8}]
 Pick $i \in \bl$. If $i \notin \bss$, next since $\delta_{m+\ell}<0.5$, we can apply Lemma \ref{lemma5} and get $|\hat{\bx}_{i}|\leq \frac{\delta_{m+\ell}}{1-\delta_{m+\ell}-\delta_{m+\ell}^2}||\bxs_{\bss\backslash \bp}||_2$, then this implies $|\hat{\bx}_i|=0$. It remains to show the converse, namely that if $i \in \bss\cap \bl$ then $|\hat{\bx}_i|>0$. Suppose that there exists $i \in \bss\cap\bl$ such that $|\hat{\bx}_i|=0$. From Lemma~\ref{lemma4} we get  
 \begin{equation}\label{eq:dummy9}
 |\hat{\bx}_i|=\frac{|\bphi_i^\sT \lsr{\by}{\bp/\{i\}}|}{||\lsr{\bphi_i}{\bp/\{i\}}||_2^2}\,,
 \end{equation}
which yields ${|\bphi_i^\sT \lsr{\by}{\bp/\{i\}}|}=0\,.$ Next, by expanding~\eqref{eq:dummy9} we get
$$|\bphi_i^\sT (I-P\{\bphi_{\bp/\{i\}}\})\bphi_{\bss}\bxs_{\bss}|=0.$$ Then, since $\bss \subset \bp$ and $i \in \bss$, we deduce that 
\begin{equation}\label{eq: dummy1}
|\bphi_i^\sT (I-P\{\bphi_{\bp/\{i\}}\})\bphi_{i}\bxs_{i}|=0.
\end{equation}
On the other hand, we have
\begin{align*}
    |\bphi_i^\sT (I-P\{\bphi_{\bp/\{i\}}\})\bphi_{i}\bxs_{i}| 
    &\geq |\bphi_i^\sT \bphi_{i}\bxs_{i}|-|\bphi_i^\sT P\{\bphi_{\bp/\{i\}}\}\bphi_{i}\bxs_{i}|\\
    &\overset{(a)}{\geq} |\bxs_i|-\frac{\delta_{m+\ell}^2}{1-\delta_{m+\ell}}|\bxs_i|\,,
    \end{align*}
where for $(a)$ we used Lemma~\ref{lemma2}. This implies $ |\bphi_i^\sT (I-P\{\bphi_{\bp/\{i\}}\})\bphi_{i}\bxs_{i}|>0$ and contradicts~\eqref{eq: dummy1}. Hence, if $i \in \bl\cap \bss $, then $|\hat{\bx_i}|>0$, which completes the proof.
\end{proof}
\end{proof}

\section{Concluding Remarks}\label{crem}
We proposed LiRE, a low complexity error-correction module for sparse recovery algorithms, and provided sufficient conditions under which LiRE corrects all errors made by the baseline algorithm. Simulations show that LiRE may boost the performance of low complexity greedy algorithms to attain the performance of significantly more complex ones, as we saw in the comparison of LiRE$\circ$OMP vs. BP. Alternatively, LiRE may be used as a fast standalone support recovery algorithm that is competitive against OMP.  Interesting venues for future research are in order:
\begin{itemize}
    \item 
Theorem~\ref{thm1} provides a sufficient condition for error-correction that appears conservative in light of the numerical experiments. For instance, Theorem~\ref{thm1} implies that if we want to correct up to $\bar{e}$ errors (as opposed to correcting exactly $e$ errors), then because of Condition~\eqref{elcond} the list size should be equal to one. This, in turn, yields Corollary~\ref{corollary2} which appears to be loose when the number of potential errors is  large, linear in $m$---see Section~\ref{stalone}. We also observe that the list size used for the numerical experiments appears to be a robust choice across sparsity levels. Improving Theorem~\ref{thm1} by potentially relaxing Condition~\eqref{elcond} is a natural direction for future investigation.
\item A very interesting problem is to quantify the successive refinement of the support estimate obtained by running LiRE multiple times.
\item LiRE keeps the list size constant throughout its iterations. However, as iterations proceed there are fewer and fewer errors to be corrected and since the list size impacts the likelihood of changing a feature, it might be possible to improve performance by considering an adaptive list size.

\item 
Theoretical guarantees in the noisy measurement setup would yield interesting and non-trivial extensions of the results presented in this paper.
\end{itemize}

 \section*{Acknowledgement}
 The authors would like to thank Dr. Venkat Chandar for his insightful comments.




\vskip 0.2in
\bibliographystyle{IEEEtran}
\bibliography{main}

\end{document}